\newif \iffullpaper \fullpapertrue
\newif \ifvone \vonetrue
\newif\ifshowcomment
\newcommand{\deepak}[1]{\textsf{\small \color{teal}{[Deepak: {#1}]}}}
\newcommand{\mahdi}[1]{\textsf{\small \color{blue}{[Mahdi: {#1}]}}}
\newcommand{\arnab}[1]{\textsf{\small \color{red}{[Arnab: {#1}]}}}
\newcommand{\foteini}[1]{\textsf{\small \color{violet}{[Foteini: {#1}]}}}
\newcommand{\kostas}[1]{\textsf{\small \color{orange}{[kostas: {#1}]}}}
\newcommand{\reviewer}[1]{\textsf{\small \color{olive}{[Reviewer: {#1}]}}}
\newcommand{\deepak}[1]{}
\newcommand{\mahdi}[1]{}
\newcommand{\arnab}[1]{}
\newcommand{\foteini}[1]{}
\newcommand{\kostas}[1]{}
\newcommand{\reviewer}[1]{}
\newcommand{\ra}[1]{\renewcommand{\arraystretch}{#1}}
\newcommand*\circled[1]{\tikz[baseline=(char.base)]{
            \node[shape=circle,draw,inner sep=0.3pt] (char) {#1};}}
\definecolor{grey1}{rgb}{0.65,0.65,0.65}
\newcommand{\secparam}{\lambda}
\newcommand{\secpar}{\secparam}
\setlist[itemize]{itemsep=1.5pt, topsep=2pt}
\setlist[enumerate]{itemsep=1.5pt, topsep=2pt}
\setlist[description]{itemsep=1.5pt, topsep=2pt}
\renewcommand{\paragraph}[1]{\vspace{4pt} \noindent \textbf{#1:}}
\renewcommand{\sim}{\mathsf{Sim}}
\newcommand{\advA}{\mathcal{A}}
\newcommand{\setup}{\mathsf{Gen}}
\newcommand{\keygen}{\mathsf{KeyGen}}
\newcommand{\sk}{\mathsf{sk}}
\newcommand{\pk}{\mathsf{pk}}
\newcommand{\sign}{\mathsf{Sign}}
\newcommand{\verify}{\mathsf{Verify}}
\newcommand{\sample}{\overset{\$}{\leftarrow}}
\newcommand{\vk}{\mathsf{vk}}
\newcommand{\extract}{\mathsf{Ext}}
\newcommand{\ignore}[1]{}
\newcommand{\oracle}{\mathcal{O}}
\newcommand{\pcreturn}{\mathsf{return}}
\newcommand{\bin}{\{0,1\}}
\newcommand{\bits}{\bin}
\newcommand{\ext}{\mathsf{Ext}}
\newcommand{\crs}{\mathsf{CRS}}
\let\negl\relax
\newcommand{\negl}{\mathsf{negl}}
\Crefname{equation}{Eq.}{Eqs.}
\Crefname{figure}{Fig.}{Figs.}
\Crefname{tabular}{Tab.}{Tabs.}
\Crefname{appendix}{App.}{Apps.}
\Crefname{section}{Sec.}{Sects.}
\Crefname{definition}{Def.}{Defs.}
\newcommand{\appref}[1]{
    \iffullpaper
        \cref{#1}
    \else
        the full paper~\cite{zkLogin-arXiv}
    \fi
}
\newcommand{\prove}{\mathsf{Prove}}
\newcommand{\zk}{\mathsf{ZK}}
\definecolor{irrelgray}{gray}{0.6}
\colorlet{punct}{red!60!black}
\definecolor{background}{HTML}{EEEEEE}
\definecolor{delim}{RGB}{20,105,176}
\colorlet{numb}{magenta!60!black}
\colorlet{comment}{green!70!black}
\lstdefinestyle{interfaces}{
  float=tp
}
\lstdefinelanguage{json}{
    basicstyle=\scriptsize\ttfamily,
    stepnumber=1,
    numbersep=8pt,
    showstringspaces=false,
    breaklines=true,
    frame=lines,
    backgroundcolor=\color{background},
    literate=
     *{0}{{{\color{numb}0}}}{1}
      {1}{{{\color{numb}1}}}{1}
      {2}{{{\color{numb}2}}}{1}
      {3}{{{\color{numb}3}}}{1}
      {4}{{{\color{numb}4}}}{1}
      {5}{{{\color{numb}5}}}{1}
      {6}{{{\color{numb}6}}}{1}
      {7}{{{\color{numb}7}}}{1}
      {8}{{{\color{numb}8}}}{1}
      {9}{{{\color{numb}9}}}{1}
      {:}{{{\color{punct}{:}}}}{1}
      {,}{{{\color{punct}{,}}}}{1}
      {\{}{{{\color{delim}{\{}}}}{1}
      {\}}{{{\color{delim}{\}}}}}{1}
      {[}{{{\color{delim}{[}}}}{1}
      {]}{{{\color{delim}{]}}}}{1},
    morestring=[b]",
    commentstyle=\color{comment},
    morecomment=[l]{\#} 
}
\lstdefinestyle{jwt}{
    backgroundcolor=\color{background},   
    basicstyle=\sffamily\footnotesize,
    commentstyle=\color{comment},
    frame=lines,
    morecomment=[l]{\#} 
}
\newcommand{\Wit}{\mathsf{Wit}}
\def\TWS{Tagged Witness Signature\xspace}
\newcommand{\sub}{\mathsf{sub}}
\newcommand{\aud}{\mathsf{aud}}
\newcommand{\iss}{\mathsf{iss}}
\newcommand{\jwt}{\mathsf{jwt}}
\newcommand{\sfemail}{\mathsf{email}}
\newcommand{\issueJWT}{JWT.\mathsf{Issue}}
\newcommand{\verifyJWT}{JWT.\mathsf{Verify}}
\newcommand{\nonce}{\mathsf{nonce}}
\newcommand{\claimset}{\mathcal{C}}
\newcommand{\stableid}{\mathsf{stid}}
\renewcommand{\sk}[1]{\mathsf{sk}_{#1}}
\def\sysname{zkLogin\xspace}
\def\systws{\Sigma_{\mathsf{\sysname}}}
\newcommand{\ephsk}{sk_{u}}
\newcommand{\ephpk}{vk_{u}}
\newcommand{\OPsk}{\sk{OP}}
\newcommand{\OPpk}{\pk_{OP}}
\newcommand{\zkaddr}{\mathsf{zkaddr}}
\newcommand{\salt}{\mathsf{salt}}
\newcommand{\saltSeed}{k_{seed}}
\newcommand{\expiryTime}{T_{max}}
\newcommand{\curTime}{T_{cur}}
\newcommand{\expiryDelta}{\delta}
\newcommand{\nonceRand}{r}
\renewcommand{\tag}{\mathsf{tag}}
\newcommand{\curOPpk}{\pk_{OP}}
\newcommand{\curOPsk}{\sk{OP}}
\newcommand{\td}{\mathsf{trap}}
\newcommand{\SimGen}{\mathsf{SimGen}}
\newcommand{\simsign}{\mathsf{SimSign}}
\newcommand{\pzk}{\mathsf{Ckt}}
\newcommand{\zkx}{\mathsf{zkx}}
\newcommand{\zkw}{\mathsf{zkw}}
\newtheorem{definition}{Definition}
\newtheorem{theorem}{Theorem}
\newtcolorbox{titlebox}[5]{enhanced,center,colframe=black,colback=white,boxrule={#3},arc={#2},auto outer arc,%
	breakable,pad at break*=5pt,vfill before first,before={\par\medskip\noindent},after={\par\medskip},top=12pt,left=4pt,%
	enlarge top by=7pt,
	title={\rule[-.3\baselineskip]{0pt}{\baselineskip}\normalsize\sffamily\bfseries #1}, varwidth boxed title*=-30pt, 
	attach boxed title to top left={yshift=-10pt,xshift=10pt}, coltitle=black,
	boxed title style={colback=white,boxrule={#5},arc={#4},auto outer arc}
}
\newsavebox{\fboxenvbox}
\newenvironment{fboxenv}
{\begin{lrbox}{\fboxenvbox}}
	{\end{lrbox}\fbox{\usebox{\fboxenvbox}}}
\newcommand{\uline}[1]{\underline{#1}}
\newcommand{\subalign}[1]{%
	\vcenter{%
		\Let@ \restore@math@cr \default@tag
		\baselineskip\fontdimen10 \scriptfont\tw@
		\advance\baselineskip\fontdimen12 \scriptfont\tw@
		\lineskip\thr@@\fontdimen8 \scriptfont\thr@@
		\lineskiplimit\lineskip
		\ialign{\hfil$\m@th\scriptstyle##$&$\m@th\scriptstyle{}##$\hfil\crcr
			#1\crcr
		}%
	}%
}
\newcommand{\Gen}{\mathsf{Gen}}
\newcommand{\comm}{\mathsf{Com}}
\newcommand{\twsuf}{\mathsf{EUF\mbox{-}CTMA}}
\title{\sysname: Privacy-Preserving Blockchain Authentication with Existing Credentials}
\begin{document}
\author{Foteini Baldimtsi}
\orcid{0000-0003-3296-5336}
\affiliation{
    \institution{Mysten Labs}
    \city{Palo Alto} 
    \state{CA}
    \country{USA}
}
\affiliation{
    \institution{George Mason University}
    \city{Fairfax}
    \state{VA}
    \country{USA}
}
\email{foteini@mystenlabs.com}

\author{Konstantinos Kryptos Chalkias}
\orcid{0000-0002-3252-9975}
\affiliation{
    \institution{Mysten Labs} 
    \city{Palo Alto} 
    \state{CA}
    \country{USA}
}
\email{kostas@mystenlabs.com}
\authornotemark[1]

\author{Yan Ji}
\orcid{0000-0002-9448-2164}
\affiliation{
    \institution{Cornell Tech}
    \city{New York}
    \state{NY}
    \country{USA}
}
\email{jyamy42@gmail.com}

\author{Jonas Lindstrøm}
\orcid{0000-0002-1989-3019}
\affiliation{
    \institution{Mysten Labs} 
    \city{Palo Alto} 
    \state{CA}
    \country{USA}
}
\email{jonas@mystenlabs.com}

\author{Deepak Maram}
\orcid{0000-0001-5324-6889}
\affiliation{
    \institution{Mysten Labs} 
    \city{Palo Alto} 
    \state{CA}
    \country{USA}
}
\email{deepak@mystenlabs.com}
\authornote{Corresponding authors}

\author{Ben Riva}
\orcid{0009-0008-2192-1905}
\affiliation{
    \institution{Mysten Labs} 
    \city{Palo Alto} 
    \state{CA}
    \country{USA}
}
\email{benriva@mystenlabs.com}

\author{Arnab Roy}
\orcid{0009-0005-3770-9982}
\affiliation{
    \institution{Mysten Labs} 
    \city{Palo Alto}
    \state{CA}
    \country{USA}
}
\email{arnab@mystenlabs.com}

\author{Mahdi Sedaghat}
\orcid{0000-0002-1507-6927}
\affiliation{
    \institution{COSIC, KU Leuven} 
    \city{Leuven}
    \country{Belgium}
}
\email{ssedagha@esat.kuleuven.be}

\author{Joy Wang}
\orcid{0009-0007-9002-3191}
\affiliation{
    \institution{Mysten Labs} 
    \city{Palo Alto} 
    \state{CA}
    \country{USA}
}
\email{joy@mystenlabs.com}

\renewcommand{\shortauthors}{Foteini Baldimtsi et al.}

\begin{abstract}
    For many users, a private key based wallet serves as the primary entry point to blockchains. Commonly recommended wallet authentication methods, such as mnemonics or hardware wallets, can be cumbersome. This difficulty in user onboarding has significantly hindered the adoption of blockchain-based applications.

We develop zkLogin, a novel technique that leverages identity tokens issued by popular platforms (any OpenID Connect enabled platform e.g., Google, Facebook, etc.) to authenticate transactions. At the heart of zkLogin lies a signature scheme allowing the signer to \textit{sign using their existing OpenID accounts} and nothing else. This improves the user experience significantly as users do not need to remember a new secret and can reuse their existing accounts.

zkLogin provides strong security and privacy guarantees. Unlike prior works, zkLogin's security relies solely on the underlying platform's authentication mechanism without the need for any additional trusted parties (e.g., trusted hardware or oracles). As the name suggests, zkLogin leverages zero-knowledge proofs (ZKP) to ensure that the sensitive link between a user's off-chain and on-chain identities is hidden, even from the platform itself. 

zkLogin enables a number of important applications outside blockchains. It allows billions of users to produce \textit{verifiable digital content leveraging their existing digital identities}, e.g., email address. For example, a journalist can use zkLogin to sign a news article with their email address, allowing verification of the article's authorship by any party.

We have implemented and deployed zkLogin on the Sui blockchain as an additional alternative to traditional digital signature-based addresses. Due to the ease of web3 on-boarding just with social login, many hundreds of thousands of zkLogin accounts have already been generated in various industries such as gaming, DeFi, direct payments, NFT collections, sports racing, cultural heritage, and many more.
\end{abstract}

\begin{CCSXML}
<ccs2012>
   <concept>
       <concept_id>10002978.10002979.10002981.10011602</concept_id>
       <concept_desc>Security and privacy~Digital signatures</concept_desc>
       <concept_significance>500</concept_significance>
       </concept>
 </ccs2012>
\end{CCSXML}

\keywords{
    Authentication, Privacy, Blockchain, Zero-Knowledge
}

\ccsdesc[500]{Security and privacy~Digital signatures}

\maketitle

\renewcommand{\state}{\mathsf{st}}

\section{Introduction} \label{sec:introduction}





Blockchains are decentralized ledgers maintained by a network of validators or miners. 
The blockchain ledger functions as an append-only record, logging transactions in a secure and immutable manner. 
In existing designs, each user is equipped with a unique pair of cryptographic keys: a private key and a public key. The private key of a user essentially holds the user's assets and is used to execute transactions.
To initiate a transaction, a user digitally signs it using their private key, and validators can confirm the validity of the signed transaction using the corresponding public key.
Once verified, transactions are permanently added to the blockchain.




Users can opt to store their blockchain secret keys in a self-managed, or else non-custodial wallet. While this option gives full control to users, it also comes with the responsibility to store, manage, and secure their private keys. 
If a private key is lost, the associated assets are no longer retrievable.  
For example, in the case of Bitcoin, it is estimated that 7\% of all coins are lost forever~\cite{lostbitcoins2024}.
A natural solution would instead be to resort to custodial services.
While these platforms offer a more intuitive user experience reminiscent of traditional online platforms, their reliability is contentious.
The downfall of notable custodial firms~\cite{huang2022ftxcollapse, harkin2021mtgox, morris2021gerald, rizzo2017btc}, whether due to mismanagement, security hacks or fraud, has made it difficult for users to place faith in emerging entities~\cite{yu2024don}.

A potential resolution to this predicament is to leverage the existing trust that users have in globally recognized platforms, e.g. Google, Apple etc.
The ubiquity and acceptance of standards like OAuth 2.0, which allow for the use of an existing account from one platform to authenticate on another, could serve as a direct gateway for integrating users of their platforms into the blockchain ecosystem.



However, a direct use of OAuth requires the introduction of a new trusted party for authentication purposes. 
Specifically, the OAuth protocol allows an OAuth Provider (e.g., Google) to convince an OAuth client (either a server or a piece of front-end code) about user-specific details (e.g., email).
However, since a blockchain cannot function as an OAuth client, this model would necessitate the introduction of a trusted web server, functioning as an oracle, to relay pertinent information to the blockchain.\footnote{Alternatively, the user could run separate instances of OAuth, one with each validator. However this is cumbersome and impractical.}


This scenario naturally leads to a pivotal question: Can we harness existing authentication systems to oversee a cryptocurrency wallet, \emph{without necessitating reliance on additional trusted entities}?

We answer the above question in the affirmative.
Our approach relies on the OpenID Connect specification~\cite{openid-connect-core}, that is commonly conformed to by the prevalent OAuth 2.0 providers.
OpenID providers (OP) issue a signed statement, referred to as a JSON Web Token (JWT).
An example JWT with a dummy payload is in~\cref{fig:jwt}.
The JWT's payload contains basic user information, as shown in the more realistic example payload of~\cref{lst:jwt-payload}.



\begin{figure}[ht!]
  \centering
\includegraphics[scale=0.5]{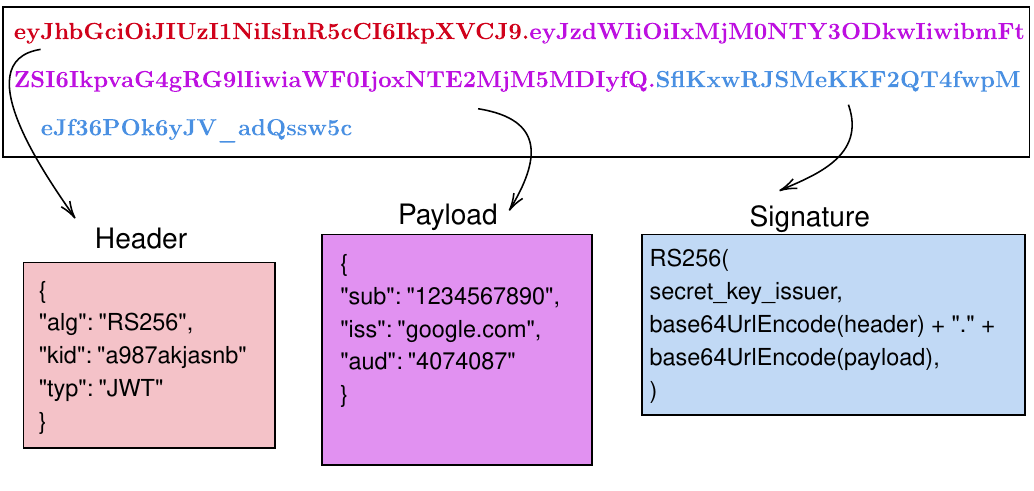}
  \caption{JSON Web Token.}
  \label{fig:jwt}
\end{figure}

\begin{lstlisting}[language=json, label={lst:jwt-payload}, caption=JWT Payload.,float,captionpos=b]
{
  "sub": "1234567890", # User ID
  "iss": "google.com", # Issuer ID
  "aud": "4074087", # Client or App ID
  "iat": 1676415809, # Issuance time
  "exp": 1676419409, # Expiry time
  "name": "John Doe",
  "email": "john.doe@gmail.com",
  "nonce": "7-VU9fuWeWtgDLHmVJ2UtRrine8"
}
\end{lstlisting}

The main idea in \sysname is to utilize the JWT's signature to directly authenticate the user with a blockchain, thus eliminating the need for any middlemen.

A strawman way of realizing this would be as follows. 

\begin{enumerate}
    \item The user logs in to their existing OP account (say on Google), leveraging OpenID Connect to obtain a JWT.
    \item The JWT is sent on the blockchain, e.g., to a contract.
    \item The embedded signature within the JWT facilitates its verification. Note that the contract would need to store the public keys of the said OpenID provider to be able to verify the JWT.
    \item The contract can employ the persistent subject identifier ($\sub$) present in the JWT to be able to identify the same user across different sessions.
\end{enumerate}


A similar approach was previously proposed in~\cite{OPENZEPPELIN}.
While this can work, the main problem is that it reveals the entire JWT payload publicly, including sensitive claims such as name, email, profile image, etc.
This is very problematic in the case of public blockchains (the focus of our work) like Ethereum, Solana or Sui where the state of a blockchain is completely public.
The above solution only focuses on authentication, not showing how the user can authorize a blockchain transaction, which is needed to truly realize a wallet.

\subsection{The \sysname Approach}
A natural way to avoid revealing the entire JWT is to leverage Zero-Knowledge Proofs (ZKP)~\cite{STOC:GolMicRac85}.
In particular, the user can input the JWT, which we define by $J$, as a private witness and prove that $J$ contains a valid signature issued by the OpenID provider among other things. 


Although the use of ZKP could in principle solve the privacy concerns, multiple challenges arise if one attempts to realize the above idea in a practical and compatible manner.

First, existing JWTs use traditional cryptographic primitives like SHA-2 and RSA signatures which are not ZK-friendly. In addition, most existing state-of-the-art ZK Proving systems incur high computational-overhead for proving (focusing more on reducing the verification complexity). 
This state of affairs implies that we need to employ powerful hardware to be able to generate proofs efficiently.
But in our setting, the proving entity is the user -- which means that the ZKP may need to be generated in resource-constrained environments, e.g., poor hardware / browsers, thus making it impractical for many users today.

Moreover, na\"ive approaches require generating a new ZKP for every transaction that the user signs (e.g., see recent work~\cite{EPRINT:PLLCCKCM23}).
    This further compounds the previous issue.



A simple trick helps us overcome the above challenges.
Before getting a JWT, the user generates an ephemeral key pair~$(\ephsk, \ephpk)$ and implants the public key~$\ephpk$ into the nonce during the OpenID Connect protocol (OpenID uses nonce to prevent replay attacks).
The signed JWT~$J$ thus acts as a certificate for the ephemeral public key, and we can reuse the corresponding private key~$\ephsk$ to sign any number of transactions.
%
Implanting public key into the nonce allows authorizing transactions.
The \sysname-signature on a transaction~$tx$ contains two steps: 
\begin{enumerate}
    \item a ZKP proving validity of a JWT~$J$ and showing that it contains~$\ephpk$ in its nonce, and
    \item a traditional digital signature on~$tx$ with~$\ephsk$.
\end{enumerate}

Notice that a single ZKP can be reused to sign \textit{any number of transactions}~-- thus amortizing the cost of the expensive ZKP generation.
The ephemeral key pair can be deleted at an appropriate time, e.g., after a browsing session ends.

While the above can help reduce the number of times a ZKP needs to be generated, the user still needs to generate proofs once in a while.
This may not always be practical (today) as we find that proving moderately complex ZKPs (e.g., around 1M constraints in Groth16) can lead to crashes or long delays on a browser (we have not tested on mobile or desktop environments where we suspect local ZKP generation might be more feasible).

Therefore, we provide an option to \textit{offload the proof generation} to a different server in a way that this entity cannot create complete \sysname signatures on its own, since it will not never learn the ephemeral private key $\ephsk$.
Essentially, we can offload the Zero-Knowledge Proof generation (first step above) to a server, and once the server returns the proof, the user verifies it efficiently and completes the \sysname signature locally (second step).

To summarize, the idea of embedding data into the nonce helps us solve three challenges, namely, (a) authorize transactions, (b) reuse a single ZKP across many transactions, and (c) offload ZKP generation securely, if needed.


\paragraph{Identifying the user on-chain}
While using a ZKP can hide most of the sensitive information in a JWT, one more challenge remains. 
Any authentication system needs a way to persistently identify a user across sessions.
In today's private-key based wallets, this role is  neatly fulfilled by the public key which gets used to derive a user's blockchain address.
In \sysname, a unique and persistent user identifier from the JWT can be used to generate a user address.
We call such an identifier as a ``stable identifier''.
A few possible options for a stable identifier include the subject identifier ($\sub$), email address or username.


\subsection{\sysname Features}

Using a widely used identifier like email (or username) as the stable identifier makes the \sysname account \textit{easily discoverable}.
This can be useful for entities wanting to maintain a public blockchain profile for transparency reasons~\cite{techreview2023cryptography}, e.g., a journalist could digitally sign a news article using their existing email address or a photographer may sign a photo using their existing Facebook account.
Prior to \sysname, this was only possible through the use of trusted oracles to port legacy credentials~\cite{candid2021, zhang2020deco}.


Discoverability, however, comes with an inherent privacy problem as the link between the user's stable identifier and a blockchain address is forever public.

Therefore, we do not make \sysname accounts discoverable by default.
Instead, we use an additional randomizer in the form of a ``salt'' to hide the user's off-chain identity.
A user's address is a hash of the stable identifier, salt and a few other fields (e.g., the OpenID Provider's and the application's unique IDs).
Without knowledge of the salt, no entity can link a \sysname address to its corresponding off-chain identity.
We refer to this property as \textit{unlinkability}.

Note that in both cases, we achieve unlinkability from all entities including the OP (except the app in the first case).
A key consideration is who manages the salt (\cref{sec:zklogin-modes}): either the application (no unlinkability from the app but simpler UX) or the user (more complex UX in some cases but unlinkability from all parties).


Another feature of \sysname is its ability to create \textit{anonymous blockchain accounts}.
The user can hide sensitive parts of their stable identifier (like email), effectively leading to a \textit{ring signature}.
For example, use only the domain of an email address as the user's identity (the domain of the email ``ram@example.com'' is ``example.com'').
This approach is suitable in settings where both discoverability and anonymity are desired, e.g., attesting to the individual's affiliation with a specific organization, like a news outlet or educational institution or a country, while maintaining their anonymity.

\sysname can also be used to create \textit{targeted claimable accounts}, i.e., safely sending assets to a specific target user even before they have a blockchain account.
A sender can derive the receiver's \sysname address using the receiver's email address and a randomly chosen salt.
The salt can be sent to the receiver over a personal channel, e.g., using an E2E encrypted chat.
Like before, the receiver can choose to manage the newly received salt by themselves or delegate its management to an app.
To the best of our knowledge, creating targeted claimable accounts was not possible before (without revealing the receiver's private key to the sender which is undesirable).





\subsection{Technical Challenges}




\paragraph{Expiring the ephemeral keys}
In practice, it is prudent to set a short expiry time for the ephemeral key pair~$\ephpk$ for security reasons. 
A first idea is to use the JWT expiry time, e.g., the ``exp'' claim in~\cref{lst:jwt-payload}.
However, this is not ideal because applications may want more control over its expiry, e.g., many JWTs expire 1hr after issuance which may be too small.
Moreover, it is challenging to use real time in blockchains that skip consensus for certain transactions~\cite{blackshear2023sui}.

\sysname facilitates setting an arbitrary expiry time~$exp$ by embedding it into the nonce.
For example, if a blockchain publishes a block once every 10 mins and the current block number is $cur$, and we'd like to expire after ten hours (600 mins), then set~$exp = cur + 60$ and compute $nonce = H(\ephpk, exp)$.
Note that it is convenient to use the chain's local notion of time, e.g., block or epoch numbers, than the real time.
More broadly, arbitrary policy information governing the use of the ephemeral key~$\ephpk$ can be embedded into the nonce, e.g., permissions on what can be signed with~$\ephpk$.




\paragraph{Formalization}
\sysname closely resembles Signatures of Knowledge~\cite{C:ChaLys06} where the knowledge of a witness is enough to produce a valid signature.
The key difference in \sysname is that witnesses (JWT and ephemeral key pair) expire.

We capture this property 
by proposing a novel cryptographic primitive called \textit{\TWS}. The \TWS syntax provides an interface for a user to sign a message by demonstrating that it can obtain a secret witness, namely a JWT, to  a public ``tag", namely the OpenID Provider's public key.
A \TWS has two main properties: unforgeability and privacy. Unforgeability states that it is hard to adversarially forge a signature, even if witnesses to other tags get leaked (e.g., expired JWTs). 
Privacy states that it is hard for an adversary to learn non-public components of the witness, e.g., the JWT and the salt, from the signature. 

\paragraph{Implementation}
We instantiate the Zero-Knowledge Proof using Groth16~\cite{EC:Groth16} as the proving system and circom DSL~\cite{circom} as the circuit specification language (cf.~\cref{sec:implementation}). 

The main circuit operations are RSA signature verification and JWT parsing to read relevant claims, e.g., ``sub'', ``nonce''.
We use previously optimized circuits for RSA verification and write our own for JWT parsing.

\ifvone
Na\"ively done, parsing the JWT requires fully parsing the resulting JSON, which would've required implementing a complete JSON parser in R1CS.
\else 
Na\"ively done, parsing the JWT requires Base64 decoding the entire JWT and fully parsing the resulting JSON.
The latter would've also required implementing a complete JSON parser in R1CS.
\fi
We manage to optimize significantly by observing that the JSONs used in JWTs follow a much simpler grammar.
Observing that all the claims of interest, e.g., ``sub'' are simple JSON key-value pairs, we can \textit{only parse specific parts of the JWT}, namely the JSON key-value pairs of interest.


\ifvone
Our final circuit has around a million constraints.
SHA-2 is the most expensive taking 66\% of the constraints whereas RSA big integer operations take up 14\% of the constraints.
Thanks to above optimizations, the JWT parsing circuit only takes the remaining 20\% of the constraints, whereas a na\"ive implementation would've resulted in significantly more.
\else
Our final circuit has around a million constraints.
SHA-2 is the most expensive taking 74\% of the constraints and RSA big integer operations are the second-most expensive at 15\%.
Thanks to above optimizations, the JWT parsing circuit only takes the remaining 11\% (115k constraints), whereas a na\"ive implementation would've resulted in significantly more.
\fi

\subsection{Other Applications: Content Credentials}

The core primitive in \sysname can be viewed as an Identity-Based signature (IBS)~\cite{C:Shamir84}.
In an IBS, a key distribution authority issues a signing key over a user's identity~$id$, e.g., their email address, such that a user-generated signature can be verified using just their identity $id$, thereby eliminating the need for a Public Key Infrastructure (PKI).

\sysname can be viewed as an IBS where the \textit{OpenID provider implicitly functions as the key distribution authority} (\sysname also requires the existence of an app implementing the OAuth flows, which may be viewed as another component of the key distribution authority). 
This enables a number of critical applications. 
With the rise of generative AI, knowing the authenticity of content, e.g., emails, documents or text messages, has become challenging~\cite{techreview2023cryptography}.
A recent proposal by major technology firms attempts to establish provenance via \textit{content credentials}~\cite{c2pa2023website}, a cryptographic signature attached to a piece of digital content, e.g., news article, photos or videos.
Issuing content credentials requires setting up a new PKI whereas our IBS scheme facilitates creating content credentials without having to setup one from scratch.

\subsection{Contributions}\label{sec:contributions}

In summary, our contributions are as follows:

\begin{enumerate}
    \item We propose \sysname, a novel approach to the design of a blockchain wallet that offers significantly better user experience than traditional wallets, thanks to its use of well-established authentication methods. Moreover, \sysname offers novel features like discoverability and claimability that enable critical applications. 
    \item We introduce the notion of tagged witness signatures to formally capture the cryptographic core of \sysname, and prove its security.
    \item We implement \sysname using Groth16 as the NIZK in just around 1M R1CS constraints, thanks to several circuit optimizations, e.g., efficient JSON parsing, and string slicing, that maybe of independent interest. Generating a \sysname signature only takes about 3s.
\end{enumerate}

\paragraph{Structure of the Paper}
We start off the rest of the paper with an overview of OpenID in \cref{sec:preliminaries}. In \cref{sec:witsig}, we define \TWS along with its security and privacy properties. In \cref{sec:zklogindetails}, we describe the \sysname system. In \cref{sec:implementation}, we describe our production deployment of \sysname and document its performance. Finally, in \cref{sec:relatedworks}, we review existing works and conclude in \cref{sec:conclusion}.



\section{Preliminaries: OpenID Connect} \label{sec:preliminaries}




\label{sec:oidc}

OpenID Connect (OIDC) is a modern authentication protocol built on top of the OAuth 2.0 framework. 
It allows third-party applications to verify the identity of end users based on the authentication performed by an OpenID Provider (OP), e.g., Google, as well as to obtain basic profile information about the end user.
Not all OAuth 2.0 conforming providers implement OpenID Connect but most of the popular providers (Google, Apple, Facebook, Microsoft, etc.) do, which suffices for our purpose.
OIDC introduces the concept of an ID token, which is a JSON Web Token (JWT) that contains \textit{claims} about the authenticated user.

JSON Web Tokens (JWTs) are a versatile tool for securely transmitting information between parties using a compact and self-contained JSON format. A JWT consists of three components: a header, a payload, and a signature (see~\cref{fig:jwt}). All the three components are encoded in base64.
Decoding the header and payload results in JSON structures.
Sticking to JWT terminology, we refer to a JSON key as a \textit{claim name} and the corresponding value as the \textit{claim value}.


\paragraph{JWT header} \cref{fig:jwt} also shows a decoded JWT header. 
The ``alg'' claim specifies the signing algorithm used to create the signature.
The JSON Web Algorithms spec recommends the use of two algorithms: RS256 and ES256~\cite{rfc7518-section3} for this purpose.
Of the two, we found that RS256 is the most widely used, hence we only support that currently.  

The ``kid'' claim helps identify the key used for signature verification.
Let $(\OPsk, \OPpk)$ generate the actual key pair where there is a one-to-one mapping between the ``kid'' value and $\OPpk$.
The public key $\OPpk$ is posted at a public URI in the form of a JSON Web Key (JWK), e.g., Google posts its keys at \url{https://www.googleapis.com/oauth2/v3/certs}.
Moreover, many providers rotate keys frequently, e.g., once every few weeks~-- so a JWT verifier needs to periodically fetch the JWKs from the OP's website.

\paragraph{JWT Payload} \cref{lst:jwt-payload} shows an example JWT payload. Any OIDC-compliant JWT contains the following claims:

\begin{enumerate}
    \item ``iss'' (issuer): Identifies the entity that issued the JWT, typically an OpenID Provider (OP). This claim is fixed per OP, i.e., it's value is same for all tokens issued by the same OpenID Provider.
    \item ``sub'' (subject): Represents the subject of the JWT, often the user or entity the token pertains to. This claim is fixed per-user. The spec defines two approaches an OP can take to generate the subject identifier:
    \begin{enumerate}
        \item \textit{Public identifier}: Assign the same subject identifier across all apps. A majority of current providers choose public identifiers, e.g., Google, Twitch, Slack.
        \item \textit{Pairwise identifier}: Assign a unique subject identifier for each app, so that apps do not correlate the end-user's activities. E.g., Apple, Facebook, Microsoft.\footnote{The notion of a public (vs) pairwise identifier also applies to other identifiers, e.g., Apple's JWTs can include a pairwise email address, effectively allowing users to hide their real email addresses.}
    \end{enumerate}
    \item ``aud'' (audience): Defines the intended recipient(s) of the token, ensuring it's only used where it's meant to be. This claim is fixed per-app. The $\aud$ value is assigned to an app after it registers with the OP.
    \item ``nonce'': A unique value generated by the client to prevent replay attacks, particularly useful in authentication and authorization flows.
\end{enumerate}

Apart from the above, OIDC allows providers to include some optional claims like emails or set some custom claims.



\paragraph{JWT API} We model the process of issuing and verifying a JWT as follows:

\begin{itemize}
    \item $\jwt \gets \issueJWT(\OPsk, \claimset)$: After the user successfully authenticates, the OpenID Provider signs the claim set $\claimset = \{\sub, \aud, \iss, \nonce, \ldots\}$, and returns a Base64-encoded JWT as shown in~\cref{fig:jwt}.
    The OIDC spec mandates the presence of certain claims like $\sub, \aud, \iss, \nonce$\footnote{For nonce, the spec mandates that a nonce claim be present if the request contains it.} in the claim set. 
    
    \item $0/1 \gets \verifyJWT(\OPpk, \jwt)$: Verifies that the JWT was indeed signed by the OpenID Provider.
\end{itemize}

We use the notation $\jwt.{\sf claimName}$ to refer to the value of a particular claim in the JWT.
For example, if $\jwt$ refers to the example in~\cref{fig:jwt}, then $\jwt.\sub = ``1234567890"$ is the subject identifier.

\section{\TWS}
\label{sec:witsig}

In traditional digital signature schemes, a signer needs to maintain a long-lived secret key which can be burdensome. The goal of  
{\TWS}s (TWS) is to slightly relax this requirement by replacing the secret signing key with a valid \textit{witness} to a public statement.
The statement comprises of a tag $t$, with respect to a public predicate $P$, which is fixed for the scheme. At a high level, a TWS should satisfy the properties of completeness, unforgeability and witness-hiding.
The completeness property ensures that signatures produced with a valid witness $w$, i.e., $P(t, w) = 1$ verify. 
The unforgeability property ensures that an adversary cannot produce a valid signature without knowing a valid witness. 
The witness-hiding property guarantees that a signature does not reveal any information about the witness, essentially capturing zero-knowledge or privacy.


Like Signatures of Knowledge (SoK) \cite{C:ChaLys06}, there is no explicit secret key required for signing - the ``secret" is the ability to obtain a witness. However, a crucial difference is that unforgeability holds even if witnesses corresponding to different tags are leaked to the adversary. 
Modeling witness leakage is crucial in practical settings where the chances of an old witness leaking over a long-enough duration of time are high, such as in \sysname. In this sense, \TWS can be thought of as SoK with forward secrecy. We also achieve a few more desirable properties compared to Krawczyk~\cite{CCS:Krawczyk00} who defined forward secure signatures. In contrast to the construction of \cite{CCS:Krawczyk00}, our construction allows arbitrary numbers of dynamically defined timestamps. 



In addition, in contrast to SoK, we employ a $\Gen(\cdot)$ algorithm that preprocesses the predicate and generates specific public parameters for it. This design eliminates the need for the verifier to know or read the predicate, enabling significant optimizations in the blockchain environment. 


\begin{definition}[\TWS]
    A tagged witness signature scheme, for a predicate $P$, is a tuple of algorithms $\mathsf{TWS}=(\Gen, \sign, \verify)$ defined as follows:
    \begin{description}
        \item[$\Gen(1^\lambda) \to \pk{}$]:
            The Gen algorithm takes the predicate $P:\{0,1\}^* \times \{0, 1\}^* \to \{0,1\}$ and a security parameter $\lambda$ as inputs, and outputs a public key $\pk{}$. The input to the predicate is a public tag $t$ and a secret witness $w$.
        \item[$\sign(t, \pk, w, m) \to \sigma$]:
            The Sign algorithm takes as input a tag $t$, the public key $\pk{}$, a witness $w$ and a message $m$, and outputs a signature $\sigma$.
        \item[$\verify(t, \pk{}, m, \sigma) \to 0/1$]:
            The Verify algorithm takes a tag $t$, the public key $\pk{}$, a message $m$ and a signature $\sigma$ as inputs, and outputs a bit either $0$ (reject) or $1$ (accept).
    \end{description}
\end{definition}

\begin{definition}[Completeness]
A \TWS for a predicate $P:\{0,1\}^* \times \{0, 1\}^* \to \{0,1\}$, achieves completeness if for all tag and witnesses $t, w$ such that $P(t, w) = 1$ and message $m$, and sufficiently large security parameter $\lambda$, we have:
\[
\Pr\left[
\begin{matrix}
    \pk \gets \Gen(1^{\lambda}),
    \\
    \sigma \gets \sign(t, \pk, w, M):
    \\
    \verify(t, \pk, m, \sigma)=1
\end{matrix}
\right]\ge 1-\negl(\lambda) \; \cdot
\]
\end{definition}

\begin{definition}[Unforgeability]\label{def:UFCMA_TWS} Let $\mathsf{TWS}:=(\Gen,\sign,\verify)$ be a \TWS for a predicate $P$. The advantage of a PPT adversary $\advA$ playing the security game in Fig.~\ref{fig:tws_cma}, is defined as:

\[Adv^{\twsuf}_{\mathsf{TWS},\advA}(\lambda):=\Pr[\mathbf{Game}^{\twsuf}_{\advA,\mathsf{TWS}}(1^\lambda)=1]\]

A $\mathsf{TWS}$ achieves unforgeability against chosen tag and message attack if we have $Adv^{\twsuf}_{\mathsf{TWS},\advA}(\lambda)\le \negl(\lambda)$.

\begin{figure}[!htb]
\centering
\begin{small}
\begin{fboxenv}
\begin{varwidth}[t]{.5\textwidth}
    \begin{onehalfspace}
    \uline{$\mathbf{Game}^{\twsuf}_{\advA,\mathsf{TWS}}(1^\lambda)$:}	\\
        $\pk \gets \Gen(1^\lambda)$\\
         $Q_w \gets \emptyset,\ Q_s \gets \emptyset$
        \\
        $\left(t^*, m^*,\sigma^*\right) \sample \advA^{\oracle^{\sign}(\cdot),\oracle^{\Wit}(\cdot)}(\pk)$\\
        $\pcreturn \left( (t^*, m^*) \notin \mathcal{Q}_s ~ \land ~ t^* \notin \mathcal{Q}_w \land \verify(t^*, \pk,m^*,\sigma^*)\right)$
    \end{onehalfspace}
\end{varwidth}
\end{fboxenv}
\begin{fboxenv}
\begin{varwidth}[t]{.5\textwidth}
    \begin{onehalfspace}
        \uline{$\oracle^{\Wit}{(t)}$:}	\\
        Obtain $w$, s.t. $P(t, w) = 1$ \\
        $\mathcal{Q}_w \gets \mathcal{Q}_w \cup \{t\}$ \\
        $\pcreturn~ w$
    \end{onehalfspace}
\end{varwidth}
~~\vline\hspace{2.5pt}
\begin{varwidth}[t]{.5\textwidth}
    \begin{onehalfspace}
        \uline{$\oracle^{\sign}(t, m)$:}\\
        Obtain $w$, s.t. $P(t, w) = 1$ \\
        $\sigma \gets \sign\left(t,\pk,w,m\right)$ \\
        $\mathcal{Q}_s \gets \mathcal{Q}_s \cup \{(t, m)\}$ \\
        $\pcreturn~  \sigma$
    \end{onehalfspace}
\end{varwidth}
\end{fboxenv}
\end{small}
\caption{The Unforgeability Security Game.} 
\label{fig:tws_cma}
\end{figure}

\end{definition}

In simple terms, this definition addresses the situation in which the witnesses used to generate signatures might become known to an adversary. Specifically, we model witness leakage through a witness oracle: If it is computationally hard for the adversary to obtain a witness for a new tag, the property of unforgeability ensures that the adversary cannot create a signature for this fresh tag.

As noted before, \sysname requires modeling a notion of ``tag freshness''.
This can be done by simply setting time to be one of the components of a tag. 
Since an adversary can request witnesses corresponding to any tag of its choosing, it can request witnesses for old tags, thus modeling leakage of old witnesses.
Note how the unforgeability definition is agnostic to how a higher-level protocol defines what it means for a tag to be ``fresh''.

\begin{definition}[Witness Hiding]\label{def:witnesshiding_TWS}
    A \TWS for a predicate $P$, $\mathsf{TWS}:=(\Gen,\sign,\verify)$, achieves Witness-Hiding property if for all PPT adversaries $\advA$, there exist simulators $(\SimGen, \simsign)$, playing the described security games in Fig.~\ref{fig:tws_wh} and we have:
\[
    \left| 
     \begin{split}
        &\Pr[\textbf{Expt-Real}_{\advA}(1^{\lambda})=1]
        \\ &
        -  \Pr[\textbf{Expt-Sim}_{\advA}(1^{\lambda})=1]
     \end{split}
     \right| \le \negl(\lambda) \; .
\]
    
\end{definition}

\begin{figure}[!htb]
\centering
\begin{small}
\begin{fboxenv}
\begin{varwidth}[t]{.5\textwidth}
    \begin{onehalfspace}
        \uline{$\textbf{Expt-Real}_{\advA}(1^{\lambda}):$}	\\
        $\pk \gets \Gen(1^\lambda)$\\
        $b \sample \advA^{\oracle^{\sign}(\cdot, \cdot) 
        }(\pk)$ \\
        $\pcreturn~ b$ \\
        \uline{$\oracle^{\sign}(t, m)$:}\\
        Obtain $w$, s.t. $P(t, w) = 1$ \\
        $\sigma \gets \sign\left(t,\pk,w,m\right)$ \\
        $\pcreturn~  \sigma$
        \end{onehalfspace}
\end{varwidth}
~\vline~
\begin{varwidth}[t]{.5\textwidth}
    \begin{onehalfspace}
        \uline{$\textbf{Expt-Sim}_{\advA}(1^{\lambda}):$}	\\
        $(\pk, \td) \gets \SimGen(1^\lambda)$\\
        $b \sample \advA^{\oracle^{\simsign}(\cdot, \cdot)
        }(\pk)$ \\
        $\pcreturn~ b$ \\
        \uline{$\oracle^{\simsign}(t, m):$} \\
        $\sigma \gets \simsign\left(t,\pk, \td, m\right)$ \\
        $\pcreturn~  \sigma$
    \end{onehalfspace}
\end{varwidth}
\end{fboxenv}
\end{small}
\caption{The Witness Hiding Security Game. 
}
\label{fig:tws_wh}
\end{figure}

This characterizes the idea that an adversary gains no additional information about the witness associated with tags by observing the signatures. Essentially, this defines a privacy property for witnesses.

We will construct a specific \TWS, in~\cref{sec:zklogin-details}, as 
the core cryptographic component of \sysname. We also develop a more generic construction in\appref{sec:nizk-witsig}. 





\section{The \sysname system}
\label{sec:zklogindetails}

The main goal of \sysname is to allow users to maintain blockchain accounts leveraging their existing OpenID Provider accounts.

\subsection{Model}

There are four principal interacting \textit{entities} in \sysname:

\begin{enumerate}
    \item \textbf{OpenID Provider (OP)}: This refers to any provider that supports OpenID Connect, such as Google. A key aspect of these providers is their ability to issue a signed JWT containing a set of claims during the login process. For more details, see \cref{sec:oidc}. In our formalism below, we assume that each OP uses a fixed signing key pair for simplicity. We omit detailed formalism for handling key rotation. 
    \item \textbf{User}: End users who own the \sysname address and should have the capability to sign and monitor transactions. They are assumed to hold an account with the OP and may possess limited computational resources.
    \item \textbf{Application}: The application coordinates the user's authentication process. It comprises two components: the Front-End (FE), which can be an extension, a mobile or a web app, and optionally, a Back-End (BE). 
    \item \textbf{Blockchain}: The blockchain is composed of validators who execute transactions. In this work, we focus on public blockchains, e.g., Ethereum and Sui, where the entire state is public.\footnote{Extending \sysname to privacy-preserving blockchains, e.g., ZCash, and Aleo, is an interesting direction for future work.} \sysname requires support for on-chain ZKP verification and oracles to fetch the latest JWK (OP's public key), features commonly supported on many public blockchains. 
\end{enumerate}





\paragraph{Adversarial model} 
We assume that the app's backend is untrusted whereas its frontend is trusted.
This is reasonable because the frontend code of an 
app is typically public as it gets deployed on user's devices, and is thus subject to greater public scrutiny.

We assume that the OpenID Provider (OP) is trusted. This is reasonable because the main goal of our system is to design a user-friendly wallet. 
This does not make the OP a custodian since \sysname works with existing unmodified API and the OP is not even required to know about the existence of \sysname.

\subsubsection{Syntax}
We formally define \sysname to consist of the following algorithms:

\begin{definition}[\sysname]
    A \sysname scheme, is a tuple of algorithms $\mathsf{zkLogin}=(\Gen,$ $ \mathsf{zkLoginSign}, $ $\mathsf{GetWitness}, \mathsf{zkLoginVerify})$ defined as follows:
    \begin{description}
        \item[$\mathsf{zkLoginGen}(1^\lambda) \to \pk{}$]:
            The $\mathsf{zkLoginGen}$ algorithm takes the security parameter $\lambda$ as input, and outputs a public key $\pk{}$. 
        \item[$\mathsf{zkLoginSign}(\pk, \zkaddr, \iss, M, T_{exp}) \to \sigma$]: 
            The $\mathsf{zkLoginSign}$ algorithm takes as input an address $\zkaddr$, an issuer identifier $\iss$, a message $M$, an expiry time $T_{exp}$ and outputs a signature $\sigma$.
        \item[$\mathsf{GetWitness}(
        \iss, \zkaddr, T_{exp})$ $\to w
        $]:
            The $\mathsf{GetWitness}$ algorithm takes as input an issuer $\iss$, and address $\zkaddr$, and an expiry time $T_{exp}$, and outputs a witness $w$.
        \item[$\mathsf{zkLoginVerify}(\pk{}, \zkaddr, \iss, M, \sigma, T_{cur}) \to 0/1$]:
            The $\mathsf{zkLoginVerify}$ algorithm takes a public key $\pk{}$, an address $\zkaddr$, an issuer $\iss$, a message $M$, a signature $\sigma$, and a current time $T_{cur}$ as inputs, and outputs a bit either $0$ (reject) or $1$ (accept).
    \end{description}
\end{definition}

\subsubsection{Properties}
The completeness property of \sysname is described in\appref{app-prelims}.
We require \sysname to guarantee that unintended entities should not be able to perform certain actions or gain undesired visibility.


\paragraph{Security}
The main security property is a form of  unforgeability:
like in any secure signature scheme, an adversary should not be able to sign messages on behalf of the user.
In addition, we also want to prevent signatures on transactions based on expired JWTs. 

\begin{definition}[zkLogin Security]\label{def:zkLogin-sec} 
The advantage of a PPT adversary $\advA$ playing the security game in Fig.~\ref{fig:zklogin-sec}, is defined as:

\[Adv^{\mathsf{Sec}}_{\mathsf{zkLogin},\advA}(\lambda):=\Pr[\mathbf{Game}^{\mathsf{Sec}}_{\advA,\mathsf{zkLogin}}(1^\lambda)=1]\]

\sysname achieves security if we have $Adv^{\mathsf{Sec}}_{\mathsf{zkLogin},\advA}(\lambda)\le \negl(\lambda)$.

\end{definition}

\begin{figure}[!htb]
\centering
\begin{small}
\begin{fboxenv}
\begin{varwidth}[t]{.5\textwidth}
    \begin{onehalfspace}
    \uline{$\mathbf{Game}^{\mathsf{Sec}}_{\advA,\mathsf{zkLogin}}(1^\lambda)$:}	\\
        $\pk \gets \mathsf{zkLoginGen}(1^\lambda)$ 
        \\
        $Q_w \gets \emptyset,\ Q_s \gets \emptyset$
        \\
        $\left(\zkaddr^*, \iss^*, m^*,\sigma^*, T_{cur}^* \right) \sample \advA^{\oracle^{\mathsf{zkLoginSign}}(\cdot),\oracle^{\mathsf{GetWitness}}(\cdot)}(\pk)$\\
        Let $E$ be the event that $\forall (T_{exp}, \zkaddr^*, \iss^*) \in Q_w: T_{cur}^* > T_{exp}$ \\
        Let $F$ be the event that $\forall (T_{exp}, \zkaddr^*, \iss^*, m^*) \in Q_s: T_{cur}^*>T_{exp}$ \\
        $\pcreturn  ~ E \land F \land \mathsf{zkLoginVerify}(\pk, \zkaddr, \iss, m^*,\sigma^*, T_{cur}^*)$
    \end{onehalfspace}
\end{varwidth}
\end{fboxenv}
\begin{fboxenv}
\begin{varwidth}[t]{0.5\textwidth}
    \begin{onehalfspace}
        \uline{$\oracle^{\mathsf{GetWitness}}{(
        \iss, \zkaddr, T_{exp})}$:}	\\
        $w \gets \mathsf{GetWitness}(
        \iss, \zkaddr, T_{exp})$ \\
        $\mathcal{Q}_w \gets \mathcal{Q}_w \cup \{(T_{exp}, \zkaddr, \iss)\}$ \\
        $\pcreturn~ w$
        \\
        \uline{$\oracle^{\mathsf{zkLoginSign}}(\zkaddr, \iss, m, T_{exp})$:}\\
        $\sigma \gets \mathsf{zkLoginSign}(\pk, \zkaddr, \iss, m, T_{exp})$ \\
        $\mathcal{Q}_s \gets \mathcal{Q}_s \cup \{(T_{exp}, \zkaddr, \iss, m)\}$ \\
        $\pcreturn~  \sigma$
    \end{onehalfspace}
\end{varwidth}
\end{fboxenv}
\end{small}
\caption{The zkLogin Security Game.} 
\label{fig:zklogin-sec}
\end{figure}


\paragraph{Unlinkability}
This property captures the inability of any party (except the app) to link a user's off-chain and on-chain identities.
That is, no one can link a user's OP-issued identifier, the app they used, or any other sensitive field in the JWT, with their \sysname-derived blockchain account. 
The only exception is the $\iss$ claim, i.e., the unlinkability property does not mandate that the issuer be unlinkable.


We formalize this below.
At a high level, given 2 adversarially indicated claim sets  $\claimset_0$ and $\claimset_1$ (recall that a claim set is the list of claims present in a JWT), the adversary cannot link which one corresponds to a given \sysname address $\zkaddr$, even given access to several \sysname signatures for any address of its choosing.
If either of the claim sets $\claimset_0$ or $\claimset_1$ belong to a user controlled by the adversary, an adversary can win the game trivially~-- so both $\claimset_0$ and $\claimset_1$ must correspond to honest users.

Note that in some \sysname modes, we relax the unlinkability property from certain entities, e.g., the application or OP. 
In such instances, the adversary below can be any other party except the exempt entity.
\arnab{This paragraph is not clear - user-friendly how? Who are ``any other party"?} \deepak{How is explained later. Expanded what I meant by ``any other party"}

\begin{definition}[Unlinkability]\label{def:zkLogin-ul}
    \sysname
    achieves unlinkability property if for all PPT adversaries $\advA$
    playing the described security games in Fig.~\ref{fig:unlinkability}, we have:
\[
    \left| 
        \Pr[\textbf{Game}^{\mathsf{UL}}_{\advA, \mathsf{zkLogin}}(1^{\lambda})=1]
        -  1/2
     \right| \le \negl(\lambda) \; .
\]
    
\end{definition}
\begin{figure}[!htb]
\centering
\begin{small}
\begin{fboxenv}
\begin{varwidth}[t]{.5\textwidth}
    \begin{onehalfspace}
    \uline{$\mathbf{Game}^{\text{UL}}_{\advA,\mathsf{zkLogin}}(1^\lambda)$:}	\\
        $\pk \gets \mathsf{zkLoginGen}(1^\lambda)$,\ 
        Sample $b \sample \{0, 1\}$
        \\
        $(\claimset_0, \claimset_1, st) \gets \advA_1(\pk)$ \\
        If $(\claimset_0.\iss \neq \claimset_1.\iss)$, then $\pcreturn~ b$ \\
        Construct $\zkaddr$ from $\claimset_b$ \\
        $ b' \gets \advA_2^{\oracle^{\mathsf{zkLoginSign}}(\cdot)}(st, \zkaddr) $ \\
        $\pcreturn~ b' = b$
    \end{onehalfspace}
\end{varwidth}
\end{fboxenv}
\end{small}

\caption{The zkLogin Unlinkability Game.} 
\label{fig:unlinkability}
\end{figure}

\subsection{System details}
\label{sec:zklogin-details}

We begin by explaining how we derive addresses in \sysname and then explain how \sysname works.

Fix an OpenID Provider ($\iss$). Typically, each application needs to manually register with the provider. In this process, the app receives a unique audience identifier ($\aud$), which is included in all the JWTs (see \cref{lst:jwt-payload}) generated by the provider meant to be consumed by the app.

\paragraph{Address derivation} A simple way to define a user's blockchain address is by hashing the user's subject identifier ($\sub$), app's audience ($\aud$) and the OP's identifier ($\iss$). 

More generally, \sysname addresses can be generated from any identifier given by the OpenID Provider, as long as it is unique for each user (meaning no two users have the same identifier) and permanent (meaning the user can't change it). We call such an identifier a ``Stable Identifier'', denoted by $\stableid$.

A good example of a Stable Identifier is the Subject Identifier ($\sub$), which the OpenID Connect spec requires to be stable~\cite{openid-connect-core-subjectid}.

Besides the subject identifier, other identifiers like email addresses, usernames or phone numbers might also meet these criteria.
However, whether an identifier is considered stable can differ from one provider to another. For instance, some providers like Google don't allow changing email addresses, but others might.

\paragraph{The necessity of Salt} An important privacy concern arises whenever the stable identifier is sensitive, such as an email address or a username.
Note that the subject identifier is also sensitive if the provider uses public subject identifiers, meaning if a user logs into two different apps, the same $\sub$ value is returned.
To address privacy concerns and prevent the stable identifier from being easily linked to a user's blockchain address, we introduce a  ``salt''~-- a type of persistent randomness.

With this approach, a user's \sysname address is 
\begin{equation}\label{eq:addr}
    \zkaddr = H(\stableid, \aud, \iss, \salt)\; .
\end{equation}

In certain specific settings, adding a salt is relatively less critical or in fact undesirable.
One example is when the stable identifier is already private, like in the case of providers that support pairwise identifiers. 
However, it can still be useful as the salt offers unlinkability from the OpenID Provider.
Saltless accounts are desirable in contexts where revealing the link between off-chain and on-chain identities is beneficial (see discoverability in~\cref{sec:novel-features}).

For the purpose of the discussion below, we assume the incorporation of a salt hereafter. 
If a user wants to avoid setting a salt, it can be set to either zero (or) a publicly known value.
Note that it is impossible to enforce the use of a salt (at a blockchain protocol level) although it is our recommended choice, and we can have different users with and without salts.


\begin{figure}[t]
\centering
\begin{tikzpicture}[node distance=1.5cm, auto, font=\small, >=stealth, every path/.style={thick}, rect node/.style={draw, fill=gray!20, rectangle}]

\node[rect node] (appfe) {App FE};
\node[rect node, above left=1.5cm and 1cm of appfe] (oauth) {OP};
\node[rect node, below=of appfe] (authority) {Blockchain};
\node[rect node, left=2.5cm of appfe] (frontend) {User};

\draw[->] (appfe) -- node[midway, below, sloped] {A. Request JWT} (frontend);
\draw[->] (frontend) -- node[midway, above, sloped] {B. Login} (oauth);
\draw[->] (oauth) -- node[midway, above, sloped] {C. Return JWT} (appfe);

\draw[->] (appfe) edge [out=30,in=90, loop] node[above] {2. Get Salt} (salt);
\draw[->] (appfe) edge [loop right] node {3. Generate ZKP} (appfe);
\draw[->] (appfe) -- node[midway] {4. Submit transaction} (authority);

\coordinate (center) at (barycentric cs:appfe=1,frontend=1,oauth=1);
\node at (center) {1. Get JWT};

\end{tikzpicture}
    \caption{The \sysname System Overview. OP and FE stand for OpenID Provider and Front-End, respectively. Salt management and ZKP generation can be done either on the client side or delegated to a backend. A new ZKP needs to be generated only once per session.}
    \label{fig:overview}
\end{figure}

\cref{fig:overview} depicts the system's workflow including four parts, explained in follows. The first two parts \circled{1} Get JWT, and \circled{2} Get Salt, describe the protocol flows for implementing the $\oracle^{\Wit(\cdot)}$ oracle. The next two parts \circled{3} Compute ZKP, and \circled{4} Submit Transactions, informally describe how the $\Gen(\cdot)$, $\sign(\cdot)$, and $\verify(\cdot)$ functions are deployed.
The construction is formalized as a \TWS scheme, $\systws$, in~\cref{fig:pi-zklogin}, over the predicate $P_{\sysname}$.

\begin{figure}
\renewcommand{\tag}{\mathsf{tag}}
\begin{small}
\begin{description}
    \item[$P_{\sysname}\left(
        \begin{matrix}
            \tag = (\curOPpk, \iss, \zkaddr, T),\\ 
            w = (\jwt, \salt, \nonceRand, vk_u, sk_u) 
        \end{matrix}
    \right)$:]  
        \begin{minipage}[t]{\linewidth}
        $\zkaddr = H(\jwt.\stableid, \jwt.\aud, \jwt.\iss, \salt)$ and
        $\jwt.\iss = \iss$ and
        $\jwt.\nonce = H(vk_u, T, r)$ and
        $JWT.\verify(\curOPpk, \jwt)$ and 
        $(sk_u, vk_u)$ is a valid sig-key-pair.
        \end{minipage}
    \vspace{1em}
    \item[$\pzk\left(
            \begin{matrix}
                \zkx = (\curOPpk, \iss, \zkaddr, T, vk_u), \\
                \zkw = (\jwt, \salt, \nonceRand)
            \end{matrix}
        \right)$:]
        \begin{minipage}[t]{\linewidth}
        $\zkaddr = H(\jwt.\stableid, \jwt.\aud, \jwt.\iss, \salt)$ and 
        $\jwt.\iss = \iss$ and 
        $\jwt.\nonce = H(vk_u, T, r)$ and
        $JWT.\verify(\curOPpk, \jwt)$. 
        \end{minipage}
    \bigskip
    \item[\uline{$\Gen(1^\lambda)$}:] \
    \begin{itemize}
        \item Let $\Pi = (\setup, \prove, \verify)$  be a NIZK scheme.
        \item Sample $zkcrs \gets \Pi.\setup(1^\lambda, \pzk)$.
        \item Output $pk = zkcrs$.
    \end{itemize}
    \item[\uline{$\sign(\tag, \pk{}, w, M)$}:] \ 
        \begin{itemize}
            \item Parse $\tag$ as $(\curOPpk, \iss, \zkaddr, T)$. 
            \item Parse $\pk{}$ as $zkcrs$.
            \item Parse $w$ as $(\jwt, \salt, \nonceRand, vk_u, \sk{u})$.
            \item Set $\sigma_u \gets Sig.\sign(\sk{u}, M)$.
            \item Set $\zkx \gets (\curOPpk, \iss, \zkaddr, T, vk_u)$.
            \item Set $\zkw \gets (\jwt, \salt, r)$.
            \item Set $\pi \gets \Pi.\prove(zkcrs, \zkx, \zkw)$.
            \item Output $\sigma = (vk_u, T, \sigma_u, \pi)$.
        \end{itemize}
    \item[\uline{$\verify(\tag, \pk{}, M, \sigma)$}:] \ 
        \begin{itemize} 
            \item Parse $\tag$ as $(\curOPpk, \iss, \zkaddr, T)$. 
            \item Parse $\sigma$ as $(vk_u, T, \sigma_u, \pi)$.
            \item Set $\zkx \gets (\curOPpk, \iss, \zkaddr, T, vk_u)$.
            \item Verify $Sig.\verify(vk_u, \sigma_u, M)$.
            \item Verify $\Pi.\verify(zkcrs, \pi, \zkx)$.
        \end{itemize} 
\end{description}
\end{small}
\caption{\TWS, $\systws$. Address is derived from the stable identifier, e.g., $\stableid = \sub$.}
\label{fig:pi-zklogin}
\end{figure}

\begin{figure}
\renewcommand{\tag}{\mathsf{tag}}
\begin{small}
\begin{description}
    \item[\uline{$\Gen(1^\lambda)$}:] \
    \begin{itemize}
        \item Sample $pk \gets \systws.\Gen(1^\lambda)$.
        \item Output $pk$.
    \end{itemize}
    \item[\uline{$\mathsf{zkLoginSign}(\pk, \zkaddr, \iss, M, T_{exp})$}:] \
        \begin{itemize}
            \item Obtain $(w, \curOPpk) \gets \mathsf{GetWitness}(\iss, \zkaddr, T_{exp})$. 
            \item Let $\tag \gets (
            \curOPpk, 
            \iss, \zkaddr, T_{exp})$.
            \item Output $\sigma \gets \systws.\sign(\tag, \pk, w, M)$.
        \end{itemize}
    \item[\uline{$\mathsf{GetWitness}(\iss, \zkaddr, T_{exp})$}:] \
        \begin{itemize}
            \item Obtain $\curOPpk$ from JWK of $\iss$.
            \item Sample $(vk_u, \sk{u}) \gets Sig.Gen(1^\lambda)$.
            \item Sample $\nonceRand \gets \{0, 1\}^{\lambda}$.
            \item Set $\nonce \gets H(vk_u, T_{exp}, r)$.
            \item Obtain $\stableid, \aud, \salt$ from the User/App.
            \item Obtain $\jwt \gets \issueJWT(\curOPsk, \{\stableid, \aud, \iss, \nonce\})$ from the OP.
            \item Set $w \gets (\jwt, \salt, \nonceRand, vk_u, \sk{u})$.
            \item Output $(w, \curOPpk)$.
        \end{itemize} 
    \item[\uline{$\mathsf{zkLoginVerify}(\pk,\zkaddr,\iss, M, \sigma, T_{cur})$}:] \
        \begin{itemize}
            \item Output $0$, if $T_{cur} > \sigma.T$.
            \item Obtain $\curOPpk$ from JWK of $\iss$.
            \item Let $\tag \gets (\curOPpk, \iss, \zkaddr, \sigma.T)$.
            \item Output $\systws.\verify(\tag, \pk, M, \sigma)$.
        \end{itemize}
\end{description}
\end{small}
\caption{The signature scheme of \sysname using $\systws$. 
}
\label{fig:zklogin-system}
\end{figure}

\paragraph{\circled{1} Get JWT}
One of the key ideas in \sysname is to treat the OpenID Provider as a certificate authority by embedding data into the nonce during the OpenID flow~\cite{EPRINT:HMFGLMMPU23}.

The application generates an ephemeral key pair $(\ephpk, \ephsk)$, sets the key pair's expiry time $\expiryTime$, generates a randomness~$\nonceRand$ and computes the nonce via
\[\nonce \gets H(\ephpk, \expiryTime, \nonceRand)\; .\] 
Note that the expiration time, $\expiryTime$, must use a denomination that can be understood by the blockchain validators, e.g., ``$\expiryTime = \text{epoch } \#100$'' if the blockchain operates in epochs.
To prevent apps from setting arbitrarily long expiry times, blockchain validators can enforce constraints over its length, e.g., ensure that $\expiryTime < \curTime + \expiryDelta$ where $\curTime$ is the current epoch number and $\expiryDelta$ is the maximum number of epochs that a key pair can remain valid for (set by the blockchain).
The randomness~$\nonceRand$ helps achieve unlinkability as it prevents the OP from learning the ephemeral public key.

Next the app initiates an OAuth flow where the user logs in to the OP. 
This step may involve opening of a pop-up window asking for user's consent if it is the first time.
After the user successfully authenticates, the app receives a JWT from the OP, $\jwt \gets \issueJWT(\OPsk, \{\stableid, \aud, \iss, \nonce, \ldots\})$.
In essence, the JWT acts as a certificate over $\ephpk$, i.e., the JWT asserts that the owner of $\ephsk$ is indeed the same as the user identified by the OP-issued $\sub$.

\paragraph{\circled{2} Get Salt} 
As noted before, we recommend the use of an additional salt for unlinkability. 
Managing the salt, however, poses an operational challenge as losing the salt implies that the assets will be permanently locked. We present two approaches to salt management: either persist it on the client-side or in an app-managed salt service, discussed in depth in~\cref{sec:zklogin-modes}.
For this discussion, we assume that the salt is somehow fetched to the app's front-end.

\paragraph{\circled{3} Compute ZKP} The next step is to use the salt and the JWT to compute a Zero-Knowledge Proof proving the association between the ephemeral public key, $\ephpk$, and the address, $\zkaddr$. The ZKP's public inputs and witnesses are:

\begin{itemize}
    \item Public inputs: OP's public key, $\OPpk$, user's address, $\zkaddr$, the ephemeral public key, $\ephpk$, and its expiration time $\expiryTime$, i.e., $P = (\OPpk, \iss, \zkaddr, \ephpk, \expiryTime)$.
    \item Witnesses: $\jwt$, $\salt$ and the nonce, randomness $\nonceRand$.
\end{itemize}

The ZKP formally proves the predicate, $\pzk$, depicted in~\cref{fig:pi-zklogin}. The setup process, $\Pi.\setup$, for the NIZK system, $\Pi$, is run at the beginning to generate the $zkcrs$. This process only needs to be done once and the generated $zkcrs$ can used for all users and OPs. 
Formally, this is part of the $\Gen$ function of the \TWS, $\systws$. Informally, $\pzk$ captures the following steps:

\begin{enumerate}
    \item Hashing the claims $\stableid, \aud, \iss$ (extracted from the JWT) with the salt gives the expected address $\zkaddr$.
    \item Hashing the ephemeral public key, $\ephpk$, expiry time, $\expiryTime$, and the randomness~$\nonceRand$ gives the expected nonce (extracted from the JWT).
    \item The JWT verifies, i.e., $\verifyJWT(\OPpk, \jwt)$.
\end{enumerate}

The above Zero-Knowledge Proof needs to be generated on the client device for maximum privacy. 
We also offer an option to delegate computation of the ZKP securely, discussed in \cref{sec:zkp-generation}.







\paragraph{\circled{4} Submit transaction} 
Say that the transaction data is $tx$ and the Zero-Knowledge Proof generated in the previous step is~$\pi$. The app uses the ephemeral private key to sign the transaction, i.e., set $\sigma_u \gets Sig.\sign(\sk{u}, tx)$.
The final \sysname signature on the transaction $tx$ is $(\ephpk, \expiryTime, \sigma_u, \pi)$. 
Each validator can verify it by:
\begin{enumerate}
    \item Verifying the ZKP, $\pi$, with the public inputs $P = (\OPpk, \iss,$ $ \zkaddr, \expiryTime, \ephpk)$.
    \item Verify that $\OPpk$ is indeed the current public key of the OpenID Provider $\iss$. This step requires an oracle posting the public keys on-chain. For example, the oracles can access the JWK endpoint periodically (e.g., say every hour) and consider all JWKs seen in the last $\Delta$ epochs as current.
    \item Verify $Sig.\verify(\ephpk, \sigma_u, tx)$.
    \item Verify $\expiryTime \geq \curTime$ and $\expiryTime < \curTime + \expiryDelta$. 
\end{enumerate}

Finally, if all the conditions hold then the validators can execute the transaction~$tx$ sent by the address~$\zkaddr$.

The steps above of accessing the current time and the public key of the OP are part of the \sysname system that are not captured by the \TWS formalism. We assume that they are obtained correctly to ensure unforgeability and enforce freshness of the tag. \deepak{@Arnab: should we mention to the new system-level formalism here?}

\subsection{\sysname Modes}
\label{sec:zklogin-modes}

We now discuss two key practical considerations when using \sysname: managing salts and generating Zero-Knowledge Proofs.

\begin{table}[t]
    \centering
    \begin{tabular}{p{3.4cm}|p{4.6cm}}
        Mode & Considerations \\
        \toprule
         Salt service managed (uses JWT as one auth factor) & Salt service can link sensitive JWT fields (e.g., stable ID) to on-chain address (enclaves or MPC avoids this) \\
         \midrule
         User managed & Edge cases such as cross-device sync and device loss need to be handled \\
        \midrule
        \midrule
         Delegate ZKP generation & ZK service can link sensitive JWT fields (e.g., stable ID) to on-chain address (enclaves or MPC avoids this) \\
         \midrule
         Local ZKP generation &  Can be slow on resource-constrained devices
    \end{tabular}
    \caption{Privacy and Usability considerations for an app when using zkLogin. Security is unconditional in all modes.}
    \label{tab:tradeoffs}
\end{table}

\iffullpaper

\begin{table}[]
    \centering
    \begin{tabular}{c|c|c}
         Salt storage &
         Unlinkability & 
         JWT Privacy \\
         \toprule
         Plaintext & No & No \\
         \midrule
         Enclave & Yes & Yes \\
         \midrule
         Plain MPC & Yes & No \\
         \midrule
         \begin{tabular}{@{}c@{}}MPC with ZK and \\  secret-shared stable ID \end{tabular} & Yes & Yes
    \end{tabular}
    \caption{Privacy properties of different salt service instantiation choices (assuming JWT is an auth factor). Unlinkability and JWT Privacy refer to guarantees w.r.t the app.}
    \label{tab:salt-service-choices}
\end{table}

\fi

\subsubsection{Salt management}
\label{sec:salt-management}

An app can  manage their users' salts in two ways: run a salt service (either their own or run by a third party, e.g., a committee of nodes) or design flows that allow the user to manage salt themselves.

\paragraph{Salt service managed}
The basic idea is to employ a \textit{salt service} that stores users' salts and returns them upon proper authentication.
Any reasonable authentication policy can be used, with the most convenient being the submission of a valid JWT (others common authentication factors like TOTP, passkeys~\cite{lassak2024aren} or a combination are also possible).
We can deterministically derive the salt from the JWT's fields and a persistent secret seed $\saltSeed$ using a PRF, as follows: $\salt = F(\saltSeed, \sub \| \aud \| \iss)$.\footnote{We can add a counter to the salt derivation function to allow users to maintain multiple fully isolated on-chain accounts using the same OP and app.} 
This approach has the benefit that the salt service needs to maintain only a small secret that is independent of the number of users.
Note that the function $F$ can be any efficient PRF, e.g., a hash function or a deterministic signature scheme, e.g., BLS, EdDSA.

Note that since users have completely isolated zkLogin accounts with different apps, it may be acceptable for an app to learn the salts of its users for reasons beyond our protocol. 
However, in other cases, it maybe desirable to offer additional privacy.
So we consider two designs for the salt service:
\begin{enumerate}
    \item \textbf{Enclave}: Run an enclave, e.g., a TPM or TEE, that secures the salt seed. We discuss this approach further in \cref{sec:implementation}. 
    \item \textbf{Plain MPC}: Suppose the salt service is composed of $n$ nodes (e.g., a conglomerate of different apps) and we employ $t$-out-of-$n$ secret sharing to split $\saltSeed$. We set $F$ to a threshold signature scheme, e.g., BLS or Schnorr. Now the user can send their JWT to all the MPC nodes and obtain a share of their salt from each node.
\end{enumerate}
Both the enclave and the MPC options achieve unlinkability, as neither the enclave operator nor any fraction of less than $t$ nodes can link the user's off-chain identity with their on-chain address.


TEEs are known for their vulnerability against side-channel attacks~\cite{nilsson2020survey}.
A side-channel leakage can reveal the salt seed and the user's JWTs, thus breaking unlinkability.

The effect of a side-channel leakage can be minimized.
We can hide most claims in the user's JWT from the enclave (or the MPC nodes) by employing ZKPs. 
Users prove JWT validity using a ZKP while revealing just the necessary claims: $\sub$, $\aud$ and $\iss$. 
This hides sensitive claims in the JWT.

Compared to using a single TEE, the MPC option makes the task of an attacker wishing to learn the salt seed harder as they need to break into multiple nodes (each MPC node could additionally employ TEEs for defence-in-depth).

However, note that in the previously laid out design, breaking into a single MPC node suffices to learn the user's JWT.
We could go one step further to hide the JWT from the committee nodes using general-purpose MPC techniques~\cite{CCS:Keller20} as outlined below (however, it is more expensive for users compared to the prior MPC proposal as the user needs to generate $n$ ZKPs):


\begin{enumerate}
    \item \textbf{Setup:} Given $n$ nodes that secret-share $\saltSeed$ like before. Set $F$ to be a MPC-friendly PRF like MiMC~\cite{maram2021candid}.
    \item \textbf{Request:} The user computes $n$ ZKPs. ZKP $\pi_i$ takes the JWT as a private input and reveals secret-shares $\sub^i$, $\aud^i$, $\iss$, JWT header and the public key.
    \item \textbf{Response:} Each MPC node~$i$ verifies the ZKP $\pi_i$. Then all the MPC nodes jointly compute $\salt = F(\saltSeed, \sub \| \aud \| \iss)$. 
\end{enumerate}

\iffullpaper
\Cref{tab:salt-service-choices} summarizes the privacy properties achieved by different options.
\fi


\paragraph{User managed}
A second approach is to \textit{let users maintain their own salts}.
This reduces dependency on apps to manage users' salts. 
However, the main concern is the burden on users to remember yet another secret.

This burden can be minimized to a large extent in many settings.
Apps can store the salt on the local storage of users' devices so that users do not need to manually enter it before for every transaction.
Many modern devices are equipped with local enclaves, and even on older devices, persisting a salt locally (e.g., in a browser's local storage) can be reasonable since it is less sensitive than a password or a mnemonic.

Apps taking this route may need to implement additional flows to handle edge cases, e.g., device loss, multi-device / multi-browser support. 
Cross-device sync is easier if all the devices belong to a single provider, e.g., across different Apple devices~\cite{AppleCrossDeviceSync}. 
Newer authentication technologies like passkeys~\cite{lassak2024aren} can also help. 
For example, salt management can be piggybacked on passkeys by using the output of a deterministic passkey signature scheme (e.g., EDDSA) over a stable identifier (e.g., user's email) as the user's salt.

\Cref{tab:tradeoffs} briefly summarizes the trade-offs between the different choices for managing salts.


\subsubsection{Zero-Knowledge Proof (ZKP) generation}
\label{sec:zkp-generation}

Today, generating a ZKP can be time-consuming in resource-constrained environments (e.g., an old mobile phone).
(We do however caution that the space of general-purpose ZK proving systems is rapidly evolving with recent developments~\cite{venkitasubramaniam2023ligetron} specifically tackling the above problem.)
So we consider delegating ZKP generation to a backend service, called the \textit{ZK service}. 
The key challenge is to delegate in a way that security and unlinkability hold.
We present two ways of delegation that offer increasing amount of privacy to the user.

\paragraph{Normal delegation} A simple approach to ZKP delegation is to send all the witnesses to the ZK service.
The service can then compute and return a ZK proof. 
Crucially, note that even a malicious ZK service cannot break security as the ephemeral private key is not revealed to the service.
Thus, the delegation is secure.
However, the ZK service can break unlinkability as it learns both the user's JWT and salt, allowing it to compute the blockchain address (\cref{eq:addr}).

\paragraph{Full-private delegation} Another option proposed by recent works \cite{chiesa2023eos, garg2023zksaas} is to delegate ZK proving to a committee of nodes such that the entire witness (including the JWT) is hidden from a colluding minority. 
A more performant option might be to instantiate the ZK service inside an enclave.
Either approach can offer unlinkability from the ZK service.

\subsection{Security Analysis}

We prove that the proposed \TWS, $\systws$, achieves the unforgeability and unlinkability properties, with formal proofs given in the full version of our paper\appref{sec:zklogin-as-tws}. 

\begin{restatable}{theorem}{zkloginuf}
\label{thm:sigma-zklogin-unforge}
       Given that $\Pi$ satisfies knowledge-soundness, and JWT and Sig are EUF-CMA secure, and $H(\cdot)$ is a collision-resistant hash function, the \TWS, $\systws$, achieves unforgeability (Def.~\ref{def:UFCMA_TWS}).
\end{restatable}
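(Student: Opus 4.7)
The plan is to proceed by contradiction, and reduce a winning adversary $\advA$ to one of three assumed-hard problems (JWT EUF-CMA, Sig EUF-CMA, or finding an $H$-collision), depending on where the forgery gets its power from. Assume $\advA$ outputs a valid forgery $(t^* = (\curOPpk^*, \iss^*, \zkaddr^*, T^*), m^*, \sigma^* = (vk_u^*, T^*, \sigma_u^*, \pi^*))$ with non-negligible probability. The first step is to invoke knowledge-soundness of $\Pi$ on the accepting proof $\pi^*$, which yields an extractor that produces a witness $(\jwt^*, \salt^*, r^*)$ satisfying $\pzk$; in particular $\jwt^*$ is JWT-valid under $\curOPpk^*$, has $\jwt^*.\iss = \iss^*$, has nonce $H(vk_u^*, T^*, r^*)$, and satisfies $\zkaddr^* = H(\jwt^*.\stableid, \jwt^*.\aud, \iss^*, \salt^*)$. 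I then case-split on whether $\jwt^*$ was ever returned by an oracle (either $\oracle^{\Wit}$ or $\oracle^{\sign}$, both of which internally invoke $\mathsf{GetWitness}$ to mint a fresh JWT).

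In the ``fresh'' case, where $\jwt^*$ was never produced by any oracle, I build a reduction $\advB$ against the EUF-CMA security of JWT: $\advB$ plays the unforgeability game against $\advA$ while forwarding every oracle-requested claim set to its own JWT signing oracle, and finally outputs $\jwt^*$. Following the paper's simplifying assumption that each OP has a fixed key, $\advB$ uses the challenge $\curOPpk$ as the OP's public key. In the ``replayed'' case, $\jwt^*$ was returned by some oracle query $i$ that used values $(\stableid_i, \aud_i, \iss_i)$ and nonce $H(vk_u^i, T_i, r_i)$. Since $\jwt^*.\nonce$ equals both $H(vk_u^*, T^*, r^*)$ and $H(vk_u^i, T_i, r_i)$, collision-resistance of $H$ forces $(vk_u^*, T^*, r^*) = (vk_u^i, T_i, r_i)$; similarly, using $\iss^* = \iss_i$ together with the address equation, if $\zkaddr^* = \zkaddr_i$ then a second appeal to collision-resistance gives $\salt^* = \salt_i$. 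Combined with the fixed-OP-key assumption, this yields $t^* = t_i$: if $i$ was a witness-oracle query we contradict $t^* \notin Q_w$, and if $i$ was a sign-oracle query with the same message we contradict $(t^*, m^*) \notin Q_s$.

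The remaining and most delicate sub-case is when query $i$ came from $\oracle^{\sign}$ but $m^* \neq m_i$: the adversary has produced $\sigma_u^*$ on a new message under the same ephemeral key $vk_u^i$ previously revealed by the sign oracle. I reduce this to Sig's EUF-CMA security by guessing in advance the index $j^* \in [q_s]$ of the targeted signing query, embedding the external Sig challenge verification key into that slot of the simulated response, and using the external Sig signing oracle to produce $\sigma_u^{j^*}$. This step is the main obstacle because it requires carefully juggling simulation of $\mathsf{GetWitness}$ (so the embedded $vk_u$ ends up inside the nonce of the JWT issued in that query), and incurs a standard $1/q_s$ loss in the reduction. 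Putting the three reductions together, $\advA$'s winning probability is bounded by a sum of negligibles, namely the knowledge-soundness error of $\Pi$, the EUF-CMA advantages of JWT and Sig (the latter scaled by $q_s$), and the collision-resistance advantage against $H$, yielding $\mathrm{Adv}^{\twsuf}_{\systws, \advA}(\lambda) \le \negl(\lambda)$ as required.
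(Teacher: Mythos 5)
Your proposal follows essentially the same route as the paper's proof, just reorganized: the paper phrases it as a sequence of game hops (Game~1 rules out Sig forgeries on previously-seen ephemeral keys, Game~2 runs the knowledge extractor for $\Pi$, Game~3 rules out fresh JWTs via the issuer's EUF-CMA security, and the residual case is dismissed by asserting the adversary cannot win), whereas you extract first and then case-split on whether $\jwt^*$ is fresh or replayed. Your write-up is in fact more explicit than the paper's in two places: you spell out the $1/q_s$ guessing argument for embedding the Sig challenge key into a sign-oracle response, and you make the collision-resistance invocations on the nonce and the address explicit (the paper's theorem statement assumes collision resistance of $H$ but its proof never names where it is used).

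One loose end is worth flagging, and it is visible precisely because you wrote the replayed case carefully: you conclude $t^* = t_i$ only under the hypothesis ``if $\zkaddr^* = \zkaddr_i$,'' but nothing in the predicate forces this. Since $\pzk$ only requires $\zkaddr^* = H(\jwt^*.\stableid, \jwt^*.\aud, \jwt^*.\iss, \salt^*)$ for \emph{some} $\salt^*$, an adversary holding the witness from query $i$ can recompute the address with a different salt, obtaining a valid signature for a tag whose address component differs from $\zkaddr_i$ and which therefore lies outside $Q_w$ and $Q_s$. Your case analysis does not cover this branch, so the chain of reductions does not close there. To be fair, the paper's own proof is silent on exactly the same point (it simply asserts that the Game~3 adversary never wins), so this is an inherited gap rather than one you introduced; if you want a complete argument you would need either to argue that such a tag does not constitute a meaningful forgery (which requires changing or interpreting the winning condition) or to bind the salt into something the adversary cannot vary.
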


\begin{restatable}{theorem}{zkloginpv}
\label{thm:sigma-zklogin-privacy}
    Given that $\Pi$ satisfies zero-knowledge, the \TWS, $\systws$, achieves witness hiding (Def.~\ref{def:witnesshiding_TWS}).
\end{restatable}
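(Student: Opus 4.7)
The plan is to prove witness hiding by exhibiting a simulator pair $(\SimGen, \simsign)$ and arguing that the real and simulated experiments in Fig.~\ref{fig:tws_wh} are indistinguishable through a direct reduction to the zero-knowledge property of the underlying NIZK $\Pi$. I would define $\SimGen(1^\lambda)$ to invoke the NIZK simulator's setup, obtaining $(zkcrs, \td) \gets \Pi.\mathsf{SimSetup}(1^\lambda, \pzk)$, and return $(\pk = zkcrs, \td)$. Then $\simsign(\tag, \pk, \td, m)$ would parse $\tag$ as $(\curOPpk, \iss, \zkaddr, T)$, sample a fresh signature key pair $(vk_u, \sk{u}) \gets Sig.Gen(1^\lambda)$, compute $\sigma_u \gets Sig.\sign(\sk{u}, m)$, set $\zkx \gets (\curOPpk, \iss, \zkaddr, T, vk_u)$, produce a simulated proof $\pi \gets \Pi.\mathsf{SimProve}(zkcrs, \td, \zkx)$, and output $\sigma = (vk_u, T, \sigma_u, \pi)$. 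The key observation justifying this design is that the only components of the witness $w = (\jwt, \salt, \nonceRand, vk_u, \sk{u})$ that influence the signature outside of $\pi$ are $(vk_u, \sk{u})$, and these are freshly sampled inside each invocation of $\sign$ (as specified in $\mathsf{GetWitness}$); their distribution is thus identical in the real and simulated experiments.

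Next, I would argue indistinguishability through a reduction $\advB$ to the zero-knowledge game of $\Pi$. The reduction receives either an honestly generated $zkcrs$ or a simulated one together with a proving oracle that returns either real or simulated proofs, forwards $\pk = zkcrs$ to $\advA$, and answers each sign query $(t, m)$ as follows: obtain a valid witness $w$ for tag $t$ (so that $P(t, w) = 1$), extract $(vk_u, \sk{u})$ from $w$, compute $\sigma_u \gets Sig.\sign(\sk{u}, m)$, form $(\zkx, \zkw)$ as in $\sign$, query the NIZK oracle on $(\zkx, \zkw)$ to obtain $\pi$, and return $(vk_u, T, \sigma_u, \pi)$ to $\advA$. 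Finally, $\advB$ outputs $\advA$'s guess. When $\advB$'s oracle returns real proofs under the honest CRS, $\advA$'s view is exactly that of $\textbf{Expt-Real}$; when it returns simulated proofs under the simulated CRS, $\advA$'s view is exactly that of $\textbf{Expt-Sim}$. Hence $|\Pr[\textbf{Expt-Real}_{\advA}=1] - \Pr[\textbf{Expt-Sim}_{\advA}=1]|$ is bounded by the ZK advantage of $\advB$ against $\Pi$, which is negligible by assumption.

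I do not anticipate a serious obstacle here, since the construction already exposes a clean division between the zero-knowledge component $\pi$ (which carries all witness-dependent information about $\jwt, \salt, \nonceRand$) and the remaining part of $\sigma$ (which depends only on freshly sampled randomness and the message). The most delicate point is purely bookkeeping: verifying that a multi-query ZK notion is the right hypothesis to invoke, so that the reduction can answer an a priori unbounded number of sign queries. Provided $\Pi$ satisfies the standard multi-theorem zero-knowledge property (which is the norm for the NIZK systems used to instantiate \sysname, such as Groth16 with a simulation-sound CRS), the reduction goes through in one step without additional hybrids, establishing $\systws$'s witness-hiding property as claimed.
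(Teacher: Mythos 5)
Your proposal is correct and follows essentially the same route as the paper: $\SimGen$ runs the NIZK setup in simulation mode to obtain the trapdoor, $\simsign$ samples a fresh ephemeral key pair, signs the message with it, and replaces the real proof with a simulated one, with indistinguishability reduced to the zero-knowledge property of $\Pi$. Your write-up is merely more explicit than the paper's (spelling out the reduction and the multi-theorem ZK hypothesis), but the argument is the same.
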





We next state the security and unlinkability properties of \sysname, based on the above properties of $\systws$, with formal proofs given in\appref{sec:zklogin-as-tws}.

\begin{restatable}{theorem}{zkloginsec}
\label{thm:sigma-zklogin-sec}
       Given that $\systws$ satisfies unforgeability (Def.~\ref{def:UFCMA_TWS}), $\sysname$ achieves security (Def.~\ref{def:zkLogin-sec}).
\end{restatable}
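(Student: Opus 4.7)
The plan is to reduce any efficient adversary $\adv$ against zkLogin security directly to an adversary $\bdv$ against the unforgeability of the underlying $\systws$. Examining \cref{fig:zklogin-system}, a zkLogin signature under $(\zkaddr, \iss, m, T_{exp})$ is literally a $\systws$ signature under the tag $\tag = (\curOPpk, \iss, \zkaddr, T_{exp})$, where $\curOPpk$ is determined by the JWK lookup for $\iss$. So $\bdv$ can forward the TWS public key it receives as the zkLogin public key, and answer $\oracle^{\mathsf{GetWitness}}(\iss, \zkaddr, T_{exp})$ and $\oracle^{\mathsf{zkLoginSign}}(\zkaddr, \iss, m, T_{exp})$ by constructing the corresponding $\tag$ and invoking its own $\oracle^{\Wit}(\tag)$ or $\oracle^{\sign}(\tag, m)$, relaying the output verbatim. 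The simulation is perfect.

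When $\adv$ outputs a forgery $(\zkaddr^*, \iss^*, m^*, \sigma^*, T_{cur}^*)$, the reduction parses $T := \sigma^*.T$, forms $\tag^* := (\curOPpk^*, \iss^*, \zkaddr^*, T)$ using the same JWK lookup rule, and submits $(\tag^*, m^*, \sigma^*)$ as its TWS forgery. Because $\mathsf{zkLoginVerify}$ delegates to $\systws.\verify$ on precisely this $\tag^*$, $m^*$, and $\sigma^*$, signature validity transfers unchanged from one game to the other.

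The critical step, and the one that requires the most care, is verifying the TWS freshness conditions $\tag^* \notin \mathcal{Q}_w$ and $(\tag^*, m^*) \notin \mathcal{Q}_s$. The key observation is that the time component of the tag enforces freshness. If $\mathsf{zkLoginVerify}$ accepts the forgery, then $T_{cur}^* \leq T$. Event $E$ in the zkLogin game stipulates that every previously recorded witness query for $(\zkaddr^*, \iss^*)$ used some $T_{exp}$ with $T_{cur}^* > T_{exp}$, and hence $T > T_{exp}$, so $\tag^*$ differs in its last coordinate from every tag $\bdv$ submitted to $\oracle^{\Wit}$ for this $(\zkaddr^*, \iss^*)$. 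Queries for any other $(\zkaddr, \iss)$ produce tags that differ in the middle coordinates, so those do not match $\tag^*$ either. Consequently $\tag^* \notin \mathcal{Q}_w$. The identical argument using event $F$ establishes $(\tag^*, m^*) \notin \mathcal{Q}_s$. Thus $\bdv$ wins whenever $\adv$ wins, giving $\mathrm{Adv}^{\mathsf{Sec}}_{\sysname, \adv}(\lambda) \leq \mathrm{Adv}^{\twsuf}_{\systws, \bdv}(\lambda)$, which is negligible by hypothesis.

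The main obstacle I anticipate is keeping the bookkeeping around $\curOPpk$ consistent: the reduction must use exactly the same deterministic JWK-to-$\curOPpk$ mapping as $\mathsf{zkLoginVerify}$ does, so that the tag the reduction constructs during query simulation agrees with the tag implicitly verified at forgery time. Once this correspondence is fixed, the winning conditions of the two games align cleanly, and the rest is a syntactic rewrite of the zkLogin predicate into the TWS predicate.
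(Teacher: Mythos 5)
Your proof is correct and follows essentially the same route as the paper: a direct black-box reduction that simulates $\oracle^{\mathsf{GetWitness}}$ and $\oracle^{\mathsf{zkLoginSign}}$ via $\oracle^{\Wit}$ and $\oracle^{\sign}$ on the tag $(\curOPpk,\iss,\zkaddr,T_{exp})$, forwards the forgery, and uses events $E$ and $F$ together with $T_{cur}^*\le \sigma^*.T$ to discharge the TWS freshness conditions. If anything, your version is slightly tighter than the paper's: you answer signing queries by calling only $\oracle^{\sign}(\tag,m)$ (whereas the paper's reduction additionally calls $\oracle^{\Wit}(\tag)$ to fetch a witness, needlessly adding that tag to $Q_w$ and potentially invalidating a forgery on the same tag with a fresh message), and you explicitly assemble the forged tag from $\sigma^*.T$ exactly as $\mathsf{zkLoginVerify}$ does.
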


We prove this theorem 
as a direct reduction from the unforgeability propery of \TWS.

\begin{restatable}{theorem}{zkloginul}
\label{thm:zklogin-ul}
    Given that $\systws$ satisfies witness hiding (Def.~\ref{def:witnesshiding_TWS}) and that $\zkaddr$'s are computed as hiding commitment to the claimsets, $\sysname$, achieves unlinkability (Def.~\ref{def:zkLogin-ul}).
\end{restatable}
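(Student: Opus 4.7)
The plan is to prove unlinkability via a standard hybrid argument that decouples the two sources of information about the challenge bit $b$ in the game of Figure~\ref{fig:unlinkability}: the challenge address $\zkaddr$ and the $\oracle^{\mathsf{zkLoginSign}}$ oracle responses. The argument proceeds through three transitions between the $b=0$ and $b=1$ versions of the game.

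First I would introduce a Hybrid~1 in which every response of $\oracle^{\mathsf{zkLoginSign}}$ is produced by the $\systws$ witness-hiding simulator $(\SimGen, \simsign)$, with trapdoor $\td$, rather than by the real $\systws.\sign$ algorithm that $\sysname$ invokes internally. Any distinguisher between Hybrid~0 (the real game with $b=0$) and Hybrid~1 translates directly into a distinguisher for the two experiments of Figure~\ref{fig:tws_wh}: the reduction installs its received key as the $\sysname$ public key, internally fixes $b=0$, constructs $\zkaddr$ from $\claimset_0$ and a fresh salt, and answers each signature query by assembling the tag $(\curOPpk, \iss, \zkaddr, T_{exp})$ and forwarding to its $\systws$ oracle. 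Witness hiding (Def.~\ref{def:witnesshiding_TWS}) therefore bounds this step.

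Next, in Hybrid~2, I would swap the challenge address, from $\zkaddr = H(\claimset_0.\stableid, \claimset_0.\aud, \claimset_0.\iss, \salt_0)$ to the analogous commitment under $\claimset_1$ with a freshly sampled $\salt_1$. Because Hybrid~1 already produces all signatures via $\simsign$, which depends only on $\pk$, $\td$, the tag, and the message and not on any claim-set-specific witness, the entire view of $\advA$ depends on the underlying claim set only through $\zkaddr$. Invoking the hypothesis that $\zkaddr$ is a hiding commitment to the claim set with $\salt$ as randomness, Hybrid~1 and Hybrid~2 are computationally indistinguishable. Finally, I would reverse the first step to obtain Hybrid~3, which again uses real $\systws$ signatures but with $\zkaddr$ derived from $\claimset_1$; this is exactly the real game with $b=1$, and the transition is bounded once more by witness hiding. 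Summing the three negligible advantages gives the theorem.

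The main obstacle I expect is the precise routing of signature queries through the $\systws$ witness-hiding challenger. The witness-hiding oracle produces a signature on an adversarially chosen $(t,m)$ only under the implicit assumption that a witness $w$ with $P_{\sysname}(t,w)=1$ exists; the reduction must therefore ensure that each tag it forwards is well-formed, which requires a valid JWT, salt, nonce randomness and ephemeral key pair for every $(\zkaddr, \iss, T_{exp})$ the adversary picks, including queries on the challenge address itself. This is exactly why Hybrid~1 is interposed before Hybrid~2: once signatures are produced by $\simsign$ the oracle no longer dereferences a claim-set-dependent witness, and the reduction only needs a witness in Hybrid~0, where it has internally fixed $b$ and therefore knows which $\claimset_b$-consistent witness to use. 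A small amount of care is also needed to respect the fact that Definition~\ref{def:zkLogin-ul} only constrains $\claimset_0.\iss = \claimset_1.\iss$ and leaves $\stableid$ and $\aud$ unconstrained, so the tag construction and the salt-resampling step must be performed uniformly across hybrids.
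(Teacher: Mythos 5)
Your proposal is correct and follows essentially the same route as the paper: the paper's proof is exactly your four-hybrid sequence (real signatures with $\claimset_0$, simulated signatures with $\claimset_0$, simulated signatures with $\claimset_1$, real signatures with $\claimset_1$), invoking witness hiding for the outer transitions and the hiding property of the address commitment for the middle one. Your additional remarks on routing signature queries through the witness-hiding challenger and on the $\iss$ constraint are sound elaborations of details the paper leaves implicit.
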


We prove this theorem 
based on the Witness Hiding property of \TWS and the fact that $\zkaddr$ is a hiding commitment to the claimsets.

\paragraph{Security of \sysname} The unforgeability of $\systws$ implies that an adversary cannot forge a signature for a given tag, even if it gets access to witnesses corresponding to other tags. In \sysname, the system layer ensures that tags time out on a defined cadence, and hence witnesses need to be refreshed for new tags. Hence older tags are no longer useful for creating signatures. In particular, this means that a new JWT is needed for a \sysname signature, once the ephemeral public key expires. 

We now discuss security considerations by component.

\begin{itemize}
    \item \textit{Application:} There is a reliance on the app for liveness (an app that stops functioning leads to locked assets) but not for security (a malicious app cannot steal user's assets). 
    Users can hedge against this risk by setting up a Multi-sig between two or more apps, e.g., a 1-out-of-2 Multi-sig between zkLogin on app 1 and 2. 
    \item \textit{Salt service:} Assuming a salt service was employed, the security of \sysname still holds even if it acts maliciously.
    This is because the salt service does not have access to the ephemeral private key needed to sign transactions.
    \item \textit{OpenID Provider:} Whether a malicious OP (or equivalently, hacked OP accounts) can break unforgeability depends on the salt management strategy. 
    If a JWT alone is enough to fetch the salt from a salt service, then a malicious OP can sign arbitrary JWTs to break security. 
    Using a different salt management strategy avoids this, e.g., if the salt is user-managed or employ a second factor such as TOTP codes in salt service. Therefore, apps that desire a higher level of security must choose a salt management strategy appropriately.
\end{itemize}

\paragraph{Unlinkability of \sysname} 
\sysname achieves unlinkability based on the witness-hiding property of $\systws$ and on $\zkaddr$ being a hiding commitment to $\stableid$ and $\aud$. 

We now discuss privacy considerations by component.


\begin{itemize}
    \item \textit{Blockchain:} The blockchain records the transactions along with the \sysname signatures.
    These records are publicly visible and contain information about the \sysname address that signed it. However, the only JWT claim visible in a transaction is the OP identifier ($\iss$).
    \item \textit{OpenID Provider:} Like with unforgeability, if a JWT alone is enough to fetch a salt, then unlinkability is lost. Otherwise, \sysname achieves unlinkability against the OP. Consequently, the OP cannot track a user's transactions (without resorting to imperfect timing side-channels\footnote{JWTs can be pre-fetched and ZKPs can be reused for long periods of time masking any obvious correlations.}). However, a small privacy leak exists stemming from the way OAuth works: the OP knows the set of all users using a given app. 
    \item \textit{Salt service:} The salt service (if employed), by design, maintains users' salts.
    However, the use of an enclave (resp., MPC) hides the salt from a malicious enclave operator (resp., a fraction of MPC nodes). 
\end{itemize}

\iffullpaper
\subsection{Extension: Nonce-less OpenID Providers}



Even though the OpenID Connect spec requires providers to include a nonce when the request contains it, some providers do not. 
We now present a protocol that adapts \sysname for nonce-less providers. 
We remark that outside the realm of OpenID, several prominent identity documents, e.g., e-Passports~\cite{icao2010mrtd}, already contain a digital signature over user's biographic information~\cite{rosenberg2023zk}, and can thus be used with the below protocol.

The main idea is to bind the ephemeral public key and the expiration time by directly hashing it with the JWT, in line with the generic construction in\appref{sec:nizk-witsig}. The final signature is:
\begin{enumerate}
    \item Hash of the ephemeral public key, timestamp, and the JWT
    \item A ZKP, which proves: 
        \begin{enumerate}
            \item Consistency of the JWT claims with the address 
            \item Validity of the OP signature 
            \item Consistency of the hash from above (1)
        \end{enumerate}
\end{enumerate}
As we show in\appref{sec:nizk-witsig}, this construction achieves unforgeability and witness hiding properties as a \TWS.

However, this construction falls short of the security guarantees of our nonce-based construction in certain scenarios. In contrast to the standard \sysname construction, there is no nonce here to commit to the ephemeral public key and the expiration time inside the JWT. Hence anybody having access to a valid JWT can construct a valid signature, without needing to authenticate to the OP. Freshness can still be enforced by additionally checking the JWT's internal timestamps inside the ZKP, but there is less flexibility compared to our previous construction.   

Note that a malicious ZK service can sign transactions, as it gets access to fresh JWTs.
%
So the above protocol is only secure if the proofs are generated locally on user devices.
We leave how to make it more secure and user-friendly for future work.
\fi
\subsection{\sysname Novel Features}
\label{sec:novel-features}

Apart from easy onboarding, \sysname offers a few novel features that were not seen before to the best of our knowledge.

\paragraph{Discoverability}
This means that a user's existing digital identifier with which they are prominently identified (email, username, etc.) can be, for this feature, publicly bound to their blockchain address. 
While this obviously breaks unlinkability, this feature can be extremely useful in certain contexts where users want to maintain public profiles. 
For example, content creators may want to establish provenance by digitally signing their content~\cite{techreview2023cryptography}.
Users with an existing \sysname account can make their account discoverable by simply revealing the stable ID, audience ID and salt.
New users can create a discoverable account by avoiding the use of a salt.

\paragraph{Partial reveal}
It can also be useful to make an existing \sysname account partially discoverable, e.g., revealing only the audience ID or just a portion of their stable ID. 
The latter would allow employees of an organization to reveal that they belong to an organization without revealing their identity, e.g., if email is the stable ID, Alice (``alice@nyu.edu'') can reveal her university affiliation by revealing the email's domain name, i.e., ``@nyu.edu''.
Revealing just the TLD of an email can also have interesting applications, e.g., reveal you are a student (``.edu'') or belong to a particular country (``.uk'' implies a UK based email).

\iffullpaper

\paragraph{Anonymous blockchain accounts} \sysname allows the creation of a blockchain account that hides the identity of the account owner within an anonymity set.
Two different approaches are possible.
If the stable identifier has some structure, e.g., with emails, we can derive a blockchain address from a portion of the email address (either the domain name or the TLD), $\zkaddr = H(P, \aud, \iss)$ where $P$ is the relevant portion of the email (no salt). 

A different approach is needed for other identifiers that do not have such a structure or if greater control is needed over the anonymity set.
First decide the anonymity set, e.g., a list of Google Subject Identifiers~$L$.
The \sysname address is derived from the entire anonymity set, $\zkaddr = H(L, \aud, \iss)$. 
To sign a transaction, the user needs to prove that their Google ID belongs to the list~$L$.

\paragraph{Claimability}
This implies the ability to send assets to a person even before they have a blockchain account.
A sender can derive the receiver's \sysname address using the receiver's email (or a similar identifier) and randomly chosen salt. The salt can either be sent to the sender over a personal channel or simply set to a default value, e.g., zero. 
The former approach places a burden on the receiver to manage the salt whereas the latter makes the address discoverable.

\fi

\section{Implementation and Deployment}
\label{sec:implementation}

We have implemented \sysname in a production environment.

The key choice for practitioners is that of the proving system. 
Among the many available ones~\cite{EPRINT:BBHR18,EPRINT:GabWilCio19,EC:CHMMVW20}, we chose Groth16~\cite{EC:Groth16}  due to its mature tooling ecosystem and compact proof sizes.
We leverage the circom DSL~\cite{circom} to efficiently write up the R1CS circuit. 
\Cref{sec:circuit-details} provides an in-depth look at the circuit covering various optimizations that have helped reduce the number of R1CS constraints by at least an order of magnitude for some components.

Groth16 necessitates a circuit-specific trusted setup, and to this end, we have orchestrated a ceremony with the participation of over 100 external contributors to generate the Common Reference String (CRS). More details about our experience conducting the ceremony are in\appref{sec:ceremony}.


\Cref{sec:performance} evaluates \sysname's performance. 
The two main components of \sysname are ZK proof generation and salt management. 
As noted in~\cref{sec:zklogin-modes}, each application can choose to implement these components in different ways. 
In this section, we implement and benchmark one of the configurations.

\paragraph{ZK proof generation}
We focus on the na\"ive delegation approach where the entire witness is sent to the ZK service.

\paragraph{Salt management}
We implement and benchmark TEE-based salt management. 
As noted before, apps can manage salts in other ways, like user-managed (shared between owned devices) or MPC. 

Specifically, we employ AWS Nitro enclaves for this benchmark, although a similar service can be spun up using alternatives like Intel SGX or Google Cloud's TEE. We generate the salt seed, $\saltSeed$, inside a Nitro enclave, which is used only once for bootstrapping purposes. This seed is managed by the AWS Key Management Service (KMS) such that the seed is only decryptable inside enclaves whose measurements\footnote{An enclave's measurements include a series of hashes and platform configuration registers (PCRs) that are unique to the enclave.} match those of the salt service.\footnote{\href{https://docs.aws.amazon.com/enclaves/latest/user/kms.html}{https://docs.aws.amazon.com/enclaves/latest/user/kms.html}} To avoid sole dependency on AWS, the salt service operator may also want to back up the salt seed with a committee of reputable parties using Shamir secret-sharing.

The salt service enclave functions as described in~\cref{sec:zklogin-modes}.
In brief, it authenticates users using the JWTs and derives a unique salt from the salt seed.
Note that the enclave needs to be able to handle HTTPS requests internally to (a) fetch JWKs from a public HTTPS endpoint to verify the JWT and (b) decrypt incoming requests and encrypt outgoing responses from inside the enclave. 
Note that the salt service can either use HTTPS or rely on encryption to service user requests so that any servers-in-the-middle do not see the salt.
Additionally, common-sense practices, such as deleting user requests as soon as they are serviced, can help minimize any adverse effects of potential TEE side channels.

\subsection{ZKP implementation details}
\label{sec:circuit-details}

Despite offloading proof generation to an untrusted server, optimizing our R1CS circuit remains crucial for several reasons.
First, a larger circuit translates to increased operational complexity during the ceremony, requiring the transmission of significantly larger files.
Second, a more complex circuit incurs greater proving costs, both in terms of the time required to generate a proof and the expense of maintaining powerful servers.



Recall that the circuit takes a JWT~$J$ as input and parses certain claims, e.g., ``sub'' from it.
Accordingly, it has two main components: (i) validating the JWT, and (ii) parsing the JWT.
\ifvone
In total, our R1CS circuit has around 1.1 million (slightly above $2^{20}$) constraints.
\else 
In total, our R1CS circuit has around $2^{20}$ constraints.
\fi
Notably, the circuit utilizes the Poseidon hash function~\cite{grassi2021poseidon} for hashing in the circuit where possible.

\subsubsection{JWT validation}

One set of circuit operations is to verify the JWT signature. 
The IETF spec recommends the use of two algorithms for signing JWTs: RS256 and ES256~\cite{rfc7518-section3}.
Of the two, we found that RS256 is the most widely used, hence we only support that currently.  
RS256 is short for RSASSA-PKCS1-v1\_5 using SHA-256.
Verifying a message under RS256 involves two steps: 
\begin{enumerate}
    \item hash the JWT's header and payload with SHA-2 to obtain a hash $h$ and
    \item perform a modular exponentiation over a 2048-bit modulus: if the signature is $\sigma$ and modulus is $p$ (both 2048-bit integers), check if $\sigma^{65537} \% p = pad(h)$ where $pad()$ is the PKCS1-v1\_5 padding function.
\end{enumerate}

We have largely reused existing code for RS256 signature verification.
SHA-2 is the most expensive operation in the overall circuit taking up around 750k \ifvone (66\%) \else (74\%) \fi of the constraints.
Big integer operations needed to perform modular exponentiation are the second most expensive operation taking around 155k \ifvone (14\%) \else (15\%) \fi of the constraints.
For bigint operations, we leverage existing code\footnote{We used code from \href{https://github.com/doubleblind-xyz/double-blind}{https://github.com/doubleblind-xyz/double-blind}, \href{https://github.com/zkp-application/circom-rsa-verify}{https://github.com/zkp-application/circom-rsa-verify} for the RSA circuit. Other helper functions were inspired from \href{https://github.com/TheFrozenFire/snark-jwt-verify}{https://github.com/TheFrozenFire/snark-jwt-verify}.} which in turn implement efficient modular multiplication techniques from~\cite{xjsnark2018}.

\subsubsection{JWT parsing}

Parsing the JWT for relevant claims takes approximately \ifvone 235k constraints (20\%) \else 115k constraints (10\%) \fi and is where most of our optimization efforts lie. 

\ifvone

A na\"ive approach to JWT parsing involves completely decoding the JWT and parsing the complete header and payload JSONs to extract all the relevant claims, e.g., ``sub'', ``aud'', ``iss'' and ``nonce'' from payload and ``kid'' from the header.
Fully parsing the JSON inside a ZK circuit involves encoding every rule in the JSON grammar, which is likely to be very inefficient. 


%


We address these challenges by selectively parsing and decoding relevant parts of the JWT, as explained below.

\begin{itemize}
    \item \textit{Public Input Header}: Instead of Base64 decoding the JWT header in the circuit, we reveal it as a public input. This public header is then parsed and validated as part of the \sysname signature verification process. Note that the header (cf.~\cref{fig:jwt}) does not contain sensitive or linkable claims.
    \item \textit{Selective Payload Parsing}: Next, we completely decode the JWT payload and selectively parse relevant portions of the resultant JSON. This is possible due to the following observations about JWT payloads: 
\begin{enumerate}
    \item Provider follows the JSON spec. In particular, it only returns valid JSONs and properly escapes all user-input strings in the JSON. 
    \item All the claims of interest are in the top-level JSON and the JSON values are either strings or boolean.
    \item Escaped quotes do not appear inside a JSON key (cf.~\appref{lst:quote-json-key}).
\end{enumerate}

\end{itemize}

Decoding a single Base64 character takes $73$ constraints, so decoding the entire payload of maximum possible length $L_{max} = 1500$ bytes (we set $L_{max}$ to 1500 based on empirical data) incurs $73 L_{max} \approx 110k$ constraints. 
Note that using lookup arguments~\cite{cryptoeprint:2022/1530} can reduce the Base64 decoding costs significantly. 

We then slice the portion of the JSON Payload containing a JSON key-value pair (a claim name and value) together with the ensuing delimiter, i.e., either a comma ``,'' or a right brace ``\}''.
It is important to include the delimiter as it indicates the end of the value.
In more detail, the JSON key-value pair parsing component of the circuit takes a (unparsed JSON) string~$S$ and does the following for every claim to be parsed:

\else 

A na\"ive approach to JWT parsing would involve:

\begin{enumerate}
    \item Base64 decoding the entire JWT header and payload. 
    \item Parsing the complete header and payload JSONs to extract all the relevant claims, e.g., ``sub'', ``aud'', ``iss'' and ``nonce'' from payload and ``kid'' from the header.
\end{enumerate}

Decoding a single Base64 character takes $73$ constraints, so decoding the entire JWT of maximum possible length $L_{max} = 1600$ bytes (we set $L_{max}$ to 1600 based on empirical data) would incur $73 L_{max} \approx 116k$ constraints.
Likewise, fully parsing the JSON inside a ZK circuit involves encoding every rule in the JSON grammar, which is likely to be very inefficient. 


We address these challenges by selectively parsing and decoding relevant parts of the JWT, as explained below.

\begin{itemize}
    \item \textit{Public Input Header}: Instead of decoding the JWT header in the circuit, we reveal it as a public input. This public header is then parsed and validated as part of the \sysname signature verification process. Note that the header (cf.~\cref{fig:jwt}) does not contain sensitive or linkable claims. 
    \item \textit{Selective Payload Decoding}: We only decode and parse relevant portions of the payload, avoiding unnecessary decoding overhead. This is possible due to the following observations about JWT payloads:    
\begin{enumerate}
    \item Provider follows the JSON spec. In particular, it only returns valid JSONs and properly escapes all user-input strings in the JSON. 
    \item All the claims of interest are in the top-level JSON and the JSON values are either strings or boolean.
    \item Escaped quotes do not appear inside a JSON key (cf.\appref{lst:quote-json-key}).
\end{enumerate}

\end{itemize}

Our strategy then is to slice the portion of the JWT Payload containing a JSON key-value pair (a claim name and value) together with the ensuing delimiter, i.e., either a comma ``,'' or a right brace ``\}''.
It is important to include the delimiter as it indicates the end of the value.
In more detail, the JSON key-value pair parsing component of the circuit takes a (unparsed JSON) string~$S$ and does the following for every claim to be parsed:

\fi

\begin{lstlisting}[language=json, caption=Decoded JWT Payload., label=lst:simple-json]
{"sub":"123","aud":"mywallet","nonce":"ajshda"}
\end{lstlisting}

\begin{enumerate}
    \item Given a start index $i$ and length $l$, use string slicing techniques (see below) to extract the substring $S' = S[i: i + l]$. For example, if $S$ is as shown in~\cref{lst:simple-json}, $i = 1$ and $l = 12$, then $S' = $\texttt{\color{red} "sub":"123",}.
    \item Check that the last character of  $S'$ is either a comma ``,'' or a close brace ``\}''.
    \item Given a colon index $j$, check that  $S'[j]$ is a colon ``:''.
    \item Output $key = S'[0:j] = $ \texttt{\color{red} "sub"} and $value =  S'[j + 1: -1] = $ \texttt{\color{red} "123"}. In addition, while we do not explain here, our circuit can also tolerate some JSON whitespaces in the string~$S$ using similar techniques.
    \item Check that the key and value are JSON strings by ensuring that the first and last characters of $key$ (resp., $value$) are the start and end quote respectively.
\end{enumerate}

The above strategy is used to parse four claims in the circuit, namely, the stable ID (e.g., $\sub$ or $\sfemail$), nonce, audience and 
email verified (If the stable ID is email, we have to additionally check that the ``$\mathsf{email\_verified}$'' claim is true in order to only accept verified emails~\cite{openid-connect-core}).
Following this, the extracted claim values are processed, e.g., the stable ID is fed into the address derivation, the nonce value is checked against a hash of the ephemeral public key, its expiry and randomness, etc.
\ifvone
\else 
In addition, we also need to explicitly check that all the claims appear at the top-level, i.e., not at a nested level (ensuring observation 2 holds). 
\fi

Note that it is possible for an attacker to over-extend the end index. Continuing the above example, an attacker could set $S' = $ \texttt{\color{red} "sub":"1320606","aud":"mywallet",}.
But this would have the effect of obtaining the JSON value of \texttt{ \color{red} "1320606","aud":"mywallet"}.
And, crucially, this is not a valid JSON string as per the JSON grammar because a JSON string would escape the double quotes.
This implies that no honest user will have this subject identifier.
Therefore, the above attempt does not lead to a security break.
Similarly, certain portions of the JWT allow user-chosen input, e.g., in nonce~-- but this does not lead to a security threat as you cannot inject unescaped key-value pairs into a JSON string.

The main security threat is of an attacker who inputs maliciously crafted inputs during the OpenID sign-in flow to sign transactions on behalf of a honest user (without having to steal the user's credentials).
We have argued that this is not possible under reasonable assumptions.

\paragraph{Slicing arrays}
Given an input array $S$ of length $n$, index $i$ and length $m$, we need to compute the subarray $S[i: i+m]$.

We start with a na\"ive slicing algorithm.
For each output index $j$, compute a dot product between $S$ and the $n$-length vector $O$ s.t. $O[j] = 1$.
Since the dot product involves $n$ multiplications, this takes $n$ constraints per output index, and a total of $n * m$ constraints.
Concretely, the value of $n$ is $L_{max} = 1600$ and the value of $m$ ranges between 50 and 200 (depending on the claim value length); so if $m=100$, slicing once incurs 160k constraints.
As we slice once for each JSON claim parsed (5 times in total), this is costly.

We observe that the default input width of elements in the JWT is only 8 bits, which is much smaller than the allowed width of a field element in BN254 (253 bits).
So we pack 16 elements together (a packed element is 128 bits), apply the na\"ive slicing algorithm over the packed elements, and finally unpack back to the original 8-bit width, while taking care of boundary conditions.
With this optimization, slicing an array costs about $18m + (n*m)/32$ constraints.
Using the above values of $n=1600$ and $m=100$, the number of constraints is only $33k$, i.e., a $4.8$x reduction per slice operation.
Overall, this trick reduces the number of constraints for slicing by more than an order of magnitude.





\subsection{Evaluation}
\label{sec:performance}

\begin{table}[]
\ra{1.2}
    \caption{Latency comparison between \sysname and Ed25519 signatures. The last row shows the time taken for signing a transaction and getting it confirmed on a test network. For \sysname, this includes the time taken to fetch salt and ZKP.}
    \centering
    \begin{tabular}{c|cc}
        Operation & \sysname & Ed25519  \\
        \toprule
        Fetch salt from salt service & 0.2 s & NA \\
        Fetch ZKP from ZK service & 2.78 s & NA \\
        Signature verification & 2.04 ms & 56.3 $\mu$s  \\ 
        \midrule
        E2E transaction confirmation & 3.52 s & 120.74 ms \\ 
        \bottomrule
    \end{tabular}

    \label{tab:latency}
\end{table}

We now evaluate the end-to-end performance of \sysname, along with micro-benchmarks for a TEE-based salt service and delegated ZK proof generation. 
We also discuss the impact of using \sysname (vs) traditional signatures on users and validators. 
\Cref{tab:latency} summarizes the important latency numbers.

\paragraph{TEE-based salt management} We have deployed the salt service within an AWS Nitro enclave running on an 
m5.xlarge\footnote{\href{https://aws.amazon.com/ec2/instance-types/m5/}{https://aws.amazon.com/ec2/instance-types/m5/}} instance, which boasts 4 vCPUs and 16 GB RAM.
The average response time for retrieving salt is 0.2 s.

\paragraph{Delegated ZKP generation} The ZK service is built around rapidsnark~\cite{rapidsnark}, a C++ and Assembly-based Groth16 prover. 
We have bechmarked ZKP generation on a Google cloud n2d-standard-16\footnote{\href{https://cloud.google.com/compute/docs/general-purpose-machines}{https://cloud.google.com/compute/docs/general-purpose-machines}} instance, which boasts 16 vCPUs and 64 GB RAM.

The peak RAM usage of the ZK service during the last three months of our production deployment is 1.19 GB with an average memory usage of 0.82 GB.
The heavy reliance on memory led us to run the ZK service in a way that each machine only handles one request at a time.
So we rely purely on horizontal scalability, i.e., adding more machines, to handle multiple requests simultaneously.

We now present the time consumed by various components of the ZK service in servicing a single request (assuming no contention).
Before calling rapidsnark, the service converts user inputs (e.g., JWT, salt) to a witness using a combination of TypeScript code and the circom-generated witness calculator~\cite{circom}.
The circom witness calculator takes 550.05 $\pm$ 22.42 ms (mean and standard deviation).
Next, the witness is used to generate a Groth16 proof using rapidsnark.
The average proof generation time is 2.1 $\pm$ 0.15 s.
In total, the end-to-end proof generation time is 2.78 $\pm$ 0.25 s. 
We report results averaged over 300 runs.
The final number includes time taken for other tasks like generating inputs to the witness calculator (TypeScript code) and DoS prevention measures such as JWT verification, proof caching that were not counted before.

The machine specs we've used are comparable to that of a high-end laptop or desktop. So our results suggest that it may already be practical to instantiate the ZK service locally on certain user devices.

\paragraph{\sysname end-user costs} The end-to-end latency experienced by an end-user when submitting a \sysname transaction is  3.52 $\pm$ 0.36 s, which includes the time taken for fetching the ZKP and salt from the ZK and salt services respectively, signing the transaction and getting it confirmed on a test network (not including the time taken for any UX pop-ups, e.g., interaction with the Google sign-in page).
On the other hand, the confirmation latency for a traditional signature is only 120.74 $\pm$ 5.32 ms. These numbers are averaged over 50 runs. 



Even though there is a notable latency increase, the end users do not perceive it in most cases due to a couple reasons.
A single ZKP can be reused across all the transactions signed in the same session. So the above number reports the latency for the first transaction of a session. Subsequent \sysname transactions incur a similar latency as that of a traditional signature.
Finally, even for the first transaction, applications can hide the extra few seconds by pre-fetching the Zero-Knowledge Proof in the background much before the user signs a transaction.



\paragraph{\sysname validator costs}
The verification of a \sysname signature takes 2.04 ms on an Apple M1 Pro with 8 cores and 16 GB RAM. This is about two orders of magnitude slower than verifying a EdDSA signature. 
The size of a \sysname signature is around 1300 bytes (when encoded in Base64), which is about an order of magnitude larger than EdDSA signatures.

Given the longer verification time and bigger signatures, we conduct a small stress test to understand the effect of verifying zkLogin signatures on blockchain validators.
We use a testbed of 8 validator nodes each with 8 cores, 128 GB RAM that are split across New York and Los Angeles.
We subject the testbed to a load of 1000 transactions per second (TPS) and observe how the throughput changes when we switch the signature scheme from EdDSA to zkLogin.
The testbed was able to process 850 TPS when using EdDSA-signed transactions as opposed to 750 TPS with zkLogin-signed transactions, i.e., roughly a 11\% decrease. 
We suspect that the drop is not big because signature verification is only a small part of what validators do. 


\section{Related Works}\label{sec:relatedworks}

Next, we classify the related works into a few categories.

\paragraph{Deployed OAuth wallets} Many wallet solutions leverage OAuth to onboard users onto blockchains. 
We are primarily interested in non-custodial OAuth-based wallets.
Prior approaches can be classified as:

    

   


\begin{itemize}


    \item (TEE) Use OAuth to authenticate to HSMs/Enclaves to verify the user's authentication token and retrieve secrets/attestation that can be used to sign transactions. E.g., Magic \cite{magic}, DAuth \cite{dauth}, Face Wallet~\cite{faceWallet}.

    \item (MPC) Use OAuth to authenticate to a non-colluding set of servers that either directly sign the transaction on the user's behalf, e.g., Web3Auth \cite{web3auth} and Near~\cite{nearoauth} uses threshold crypto for signing (or) retrieve secrets that are later used on the client side to sign transactions, e.g., Privy \cite{privy} only stores 1 out of 3 shares on the server. 
\end{itemize}


\sysname may also be viewed as a kind of 2PC between the OpenID provider and the app.
However, the main difference between \sysname and the above MPC wallets is that (a) \sysname relies on the app purely for liveness whereas MPC wallets rely on the app for security (to a varying degree depending on how many shares are held by the app), and (b)  its novel features like discoverability, partial reveal and claimability that were not possible before.

\paragraph{Other prior works} 
The approach of embedding arbitrary data into the OpenID Connect's nonce draws inspiration from recent works~\cite{EPRINT:HMFGLMMPU23, OPENZEPPELIN}.
However, a problem with these works is that they require showing the sensitive ID token to the verifier.
This is a big issue if the verifier is a public blockchain.

ZK Address Abstraction~\cite{EPRINT:PLLCCKCM23} is the closest prior work to ours: like us, they also use a general-purpose ZKP over the JWT to authenticate blockchain transactions. However, their approach has a few drawbacks which makes it impractical: 
(a) they tie the transaction closely into the ZKP, so the user would have to generate a new proof for every transaction,
(b) they assume providers use a ZK-friendly signature scheme EDDSA that is not currently used by any popular OpenID providers and 
(c) they force the user to generate the ZKP on their own.
Instead, we leverage the nonce embedding trick to reuse a single proof across many transactions and provide a choice to offload proof generation in a trustless manner.

CanDID~\cite{candid2021} enables the migration of web2 credentials to a blockchain without requiring modifications to existing providers, similar to \sysname.
However, CanDID introduces an additional MPC committee, a dependency that \sysname deliberately avoids. 
On the other hand, CanDID can port arbitrary credentials like SAML, username-password whereas \sysname only works with those that include a signature like OpenID Connect.


\paragraph{Subsequent works}
A few subsequent works~\cite{aptoskeyless,bonsaipay} also use OpenID Connect with ZKPs for onboarding users onto blockchains.
Bonsai Pay~\cite{bonsaipay}, for instance, is built on the Rust-based risc0 zkVM~\cite{risc0}, which facilitates a more accessible ZK development experience using a conventional programming language. However, its proof generation times are significantly slower, taking minutes to complete. Additionally, Bonsai Pay exposes the stable identifier (email) on-chain, a privacy concern that \sysname avoids through the use of a salt.

Aptos Keyless~\cite{aptoskeyless}, like us, uses a salt to enhance privacy but introduces notable centralization points in its design. In one mode, if the prover service becomes unavailable, no party can sign transactions, creating a critical single point of failure. Additionally, Aptos Keyless grants special powers to designated recovery applications, allowing users to access all their assets through these recovery apps, even if the original app is offline. However, this approach carries a serious risk: a vulnerability in one recovery app could lead to the loss of assets stored in \textit{all} the user accounts.





\section{Conclusion}
\label{sec:conclusion}

We have introduced \sysname, a novel approach for authenticating blockchain users by leveraging the widely-adopted OpenID Connect authentication framework.
Crucially, the security of a \sysname-based blockchain account relies solely on the security of the OpenID provider's authentication mechanism, avoiding the need for additional trusted third parties.
\sysname can be used on any public blockchain that supports Groth16 verification and has oracles to fetch the latest JWK, e.g., Ethereum, and Sui. 


At the heart of \sysname is a mechanism that utilizes a (signed) OpenID token to authorize arbitrary messages.
We formalized \sysname as a Tagged Witness Signature, an extension of Signatures of Knowledge capturing the leakage of old tokens.
We have employed Zero-Knowledge Proofs to conceal all sensitive details in an OpenID token.
Additionally, the inclusion of a salt effectively obscures any connection between an individual's off-chain and on-chain accounts.
We have validated \sysname's real-world viability with a fully functional implementation that is also currently deployed in a live production environment.


\bibliographystyle{ACM-Reference-Format}
\balance
\bibliography{localbib,cryptobib/abbrev3,cryptobib/crypto}

\iffullpaper
    \appendix
    \crefalias{section}{appendix}
    \section{Deferred Definitions}
\label{app-prelims}

\paragraph{Digital Signature Schemes} A signature scheme $\mathsf{S} = (\Gen, \sign, \verify)$ with message space $\mathsf{M}$ consists of three algorithms, defined as follows:

\begin{description}
  \item[$\Gen(1^\lambda) \to (\sk{},\pk{})$]: is a randomized algorithm that on input the security parameter $\lambda$, returns a pair of keys $(\sk, \pk)$, where $\sk{}$ is the signing key and $\pk$ is the verification key.
  \item[$\sign(\sk{}, m) \to \sigma$] takes as input the signing key $\sk{}$, and a message $m$, and returns a signature $\sigma$.
  \item[$\verify(\pk, m, \sigma) \to 0/1$] is a deterministic algorithm that takes as input the verification key $\pk$, a message $m$, and a signature $\sigma$, and outputs $1$ (accepts) if verification succeeds, and $0$ (rejects) otherwise.
\end{description}

A signature scheme satisfies correctness if for all $\lambda$, $m \in \mathsf{M}$, and every signing-verification key pair $(\sk, \pk) \gets \Gen(1^{\lambda})$, we have, $\verify(\pk, m, \sign(\sk, m)) = 1.$

\begin{definition}[EUF-CMA]
  \label{def:sign}
 Let $\mathsf{S} = (\Gen, \sign, \verify)$ be a signature scheme. The advantage of a PPT adversary $\advA$ playing the security game described 
in Fig.~\ref{fig:euf_cma}, is defined as:

\[Adv^{\mathsf{EUF-CMA}}_{\mathsf{S},\advA}(\lambda):=\Pr[\mathbf{Game}^{\text{EUF-CMA}}_{\advA,\mathsf{TWS}}(1^\lambda)=1]\]

\noindent $\mathsf{S}$ achieves Existential Unforgeability under Chosen Message Attacks (EUF-CMA) if we have $Adv^{\mathsf{EUF-CMA}}_{\mathsf{S},\advA}(\lambda)\le \negl(\lambda)$.
\end{definition}

\begin{figure}[!htb]
\centering
\begin{fboxenv}
\begin{varwidth}[t]{.5\textwidth}
    \begin{onehalfspace}
        \uline{$\mathbf{Game}^{\text{EUF-CMA}}_{\advA,\mathsf{TWS}}(1^\lambda)$:}	\\
        $(\sk, \pk) \gets \Gen( 1^\lambda)$\\
        $\left(m^*,\sigma^*\right) \sample \advA^{\oracle^{\sign}(\cdot)}(\pk)$\\
        $\pcreturn \big( m^* \notin \mathcal{Q}_s ~\land$ \\ \hphantom{2pt}$ \verify(\pk,m^*,\sigma^*)\big)$
    \end{onehalfspace}
\end{varwidth}
\vline ~
\begin{varwidth}[t]{.5\textwidth}
    \begin{onehalfspace}
        \uline{$\oracle^{\sign}(m)$:}\\
        $\sigma \gets \sign\left(\sk,m\right)$ \\
        $\mathcal{Q}_s \gets \mathcal{Q}_s \cup \{m\}$ \\
        $\pcreturn~  \sigma$
    \end{onehalfspace}
\end{varwidth}
\end{fboxenv}
\caption{The EUF-CMA security game.} 
\label{fig:euf_cma}
\end{figure}



\paragraph{Zero-Knowledge Proofs}
Non-interactive zero knowledge (NIZK)~\cite{STOC:GolMicWig87} proof as a strong cryptographic primitive enables a prover to convince a (sceptical) verifier about the truth of a statement without disclosing any additional information in one round of communication. A NIZK can be build in two possible settings: either in Random Oracle Model (ROM) or in the Common Reference String (CRS) model. Next we recall the definition of NIZK proofs in the CRS model and list their main security properties.

\begin{definition}[Non-Interactive Zero-Knowledge Proofs]
	\label{def:nizk}
	Let $\mathcal R$ be an NP-relation, the language $\mathcal L_{\mathcal R}$ can be defined as $\mathcal L_{\mathcal R}=\{x\,|\,\exists ~ w \text{ s.t. } (x,w) \in \mathcal R \}$, where $x$ and $w$ denote public statement and secret witness, respectively. A NIZK, denoted by $\Pi$, for $\mathcal R$ consists of three main PPT algorithms $\Pi=(\setup,\prove,\verify)$ defined as follows:
	\begin{itemize}
	\item $\Pi.\setup(1^\lambda,\mathcal R)\to \crs$: The CRS generation algorithm takes the unary representation of the security parameter $\lambda$ and relation $\mathcal R$ as inputs and returns a set of common reference string $\crs$ as output.
		
	\item $\prove(\crs,x,w) \to \pi$: The prove algorithm takes $\crs$, a public statement $x$ and a secret witness $w$ as inputs, and it then returns a proof $\pi$ as output.
		
    \item $\verify(\crs,x,\pi)\to 0/1$: The verify algorithm takes $\crs$, a public statement $x$ and a proof $\pi$ as input, and it then returns a bit indicating either the acceptance, $1$, or rejection, $0$, as output.
	\end{itemize}
\end{definition}

Informally speaking, a NIZK proof has three main security properties: Completeness, Zero-Knowledge and soundness (extractability), which we formally recall them in below:

\begin{definition}[Completeness] \label{def:completeness}
A NIZK proof, $\Pi$, is called complete, if for all security parameters, $\lambda$, 
and all pairs of valid $(x,w)\in R$ we have,
\[
   \Pr \left[
        \begin{matrix}
            \crs \gets \setup(1^\lambda): \\\ \verify(\crs,x,\prove(\crs,x,w))=1] 
        \end{matrix}
        \right] \ge 1-\negl(\lambda)\; .
\]
\end{definition}

\begin{definition}[Zero-Knowledge]
	\label{def:zk}
	A NIZK proof system, $\Pi$, for a given relation $\mathcal{R}$ and its corresponding language $\mathcal{L}_{\mathcal{R}}$, we define a pair of algorithms $\sim = (\sim_1, \sim_2)$ as the simulator. The simulator operates such that $\sim'(\crs, \mathsf{tpd}, x, w) = \sim_2(\crs, \mathsf{tpd}, x)$ when $(x, w) \in \mathcal{R}$, and $\sim'(\crs, \mathsf{tpd}, x, w) = \bot$ when $(x, w) \notin \mathcal{R}$, where $\mathsf{tpd}$ is a trapdoor. For $b\in\bits$, we define the experiment $\zk^\Pi_{b,\sim}(1^\secpar,\advA)$ in \cref{fig:zk-game}. The associated advantage of an adversary $\advA$ is defined as
	\[
        Adv^\zk_{\Pi,\advA,\sim}(\lambda)\coloneqq 
        \left|
        \begin{matrix}
            \Pr[\zk^\Pi_{0,\sim}(1^\secpar,\advA)=1]- \\
            \Pr[\zk^\Pi_{1,\sim}(1^\secpar,\advA)=1]
        \end{matrix}
        \right| \enspace .
	\]
	A NIZK proof system $\Pi$ achieves perfect and computational zero-knowledge, w.r.t a simulator $\sim=(\sim_1,\sim_2)$, if for all PPT adversaries $\advA$ we have $Adv^\zk_{\Pi,\advA,\sim}(\lambda)=0$, and $Adv^\zk_{\Pi,\advA,\sim}(\lambda)\leq\negl(\lambda)$, respectively.
\end{definition}

\begin{figure}[H]
	\centering
	\begin{fboxenv}
		\begin{varwidth}[t]{.5\textwidth}
			\begin{onehalfspace}
				\uline{$\zk_{0,\sim}^\Pi(1^\lambda,\advA)$\hfill} \\
				$\crs\leftarrow\setup(1^\lambda)$ \\
				$\alpha\leftarrow\advA^{\prove(\crs,\cdot,\cdot)}(\crs)$ \\
				$\pcreturn ~\alpha$
			\end{onehalfspace}
		\end{varwidth}
		\vline
		\hspace{0pt}
		\begin{varwidth}[t]{.5\textwidth}
			\begin{onehalfspace}
				\uline{$\zk_{1,\sim}^\Pi(1^\lambda,\advA)$\hfill} \\
				$(\crs,\mathsf{tpd})\leftarrow\sim_1(1^\lambda)$ \\
				$\alpha\leftarrow\advA^{\sim'(\crs,\mathsf{tpd},\cdot,\cdot)}(\crs)$ \\
				$\pcreturn ~ \alpha$
			\end{onehalfspace}
		\end{varwidth}
	\end{fboxenv}
	\caption{Zero-knowledge security property of a NIZK, $\Pi$.}
	\label{fig:zk-game}
\end{figure}

\begin{definition}[Extractability]
	\label{def:ext}
	A NIZK proof system $\Pi$ for a relation $\mathcal R$ and the language $L$ is called extractable~\cite{EC:CKLM12} if there exists a pair of algorithms $\ext:=(\ext_1,\ext_2)$ called extractors with the following advantage for all PPT adversaries $\advA$:

		\[
  \begin{split}
Adv^{\mathrm{CRS}}_{\Pi,\advA}\coloneqq& |\Pr[\crs\gets\setup(1^\secpar);1\gets\advA(\crs)] - \\& \Pr[(\crs,\state)\gets\ext_1(1^\lambda);1\gets\advA(\crs)]|\; ,
  \end{split}
		\]
	
	and

		\[
		Adv^\extract_{\Pi,\advA}(\secpar)\coloneqq\Pr\left[\begin{aligned}
			&(\crs_\ext,\state_\ext)\gets\ext_1(1^\lambda)\\&
			(x,\pi)\gets\advA(\crs_\ext):\\&
			\verify(\crs_\ext,x,\pi)=1 ~ \land \\& (x,\ext_2(\crs_\ext,\state_\ext,x,\pi))\not \in \mathcal R
		\end{aligned}
		\right] \; .
		\]

A NIZK proof system $\Pi$ is called extractable, w.r.t an extractor $\ext=(\ext_1,\ext_2)$, if $Adv^{\textup{CRS}}_{\Pi,\advA} \leq\negl(\secpar)$ and 	$Adv^\extract_{\Pi,\advA}(\secpar)\leq\negl(\secpar)$. Additionally, we refer to an extractable NIZK proof as a non-interactive zero-knowledge proof of knowledge, or NIZKPoK in short.

\smallskip

 \textbf{Succinctness}. Zero-Knowledge Succinct Non-Interactive Arguments of Knowledge, zkSNARK in short, are NIZKPoK proofs that adhere to succinctness requirements. These proofs maintain communication complexity (proof size) at sublinear levels, and in some cases, the verifier's computational workload remains sublinear, regardless of the size of the witness. In this paper, we primarily concentrate on zkSNARKs, ensuring that the proofs are short and verification cost is low while the mentioned security definitions for NIZK remain applicable for them. 
\end{definition}



\paragraph{zkLogin Completeness}
We define the completeness property for \sysname as follows.

\begin{definition}[zkLogin Completeness]
\sysname achieves completeness if for all $\zkaddr, \iss, T_{exp}, T_{cur}, M$, and sufficiently large security parameter $\lambda$, we have:
\[
\Pr\left[
\begin{matrix}
    \pk \gets \mathsf{zkLoginGen}(1^{\lambda}),
    \\
    \sigma \gets \mathsf{zkLoginSign}(\pk, \zkaddr, \iss, M, T_{exp}),
    \\
    \sigma \neq \bot \land T_{cur} < T_{exp}
    :
    \\
    \mathsf{zkLoginVerify}(\pk, \zkaddr, \iss, M, \sigma, T_{cur})=1
\end{matrix}
\right]\ge 1-\negl(\lambda) \; \cdot
\]
\end{definition}

    \section{Deferred Proofs}
\label[appendix]{sec:zklogin-as-tws}

\zkloginuf*
\begin{proof}

We prove this theorem using a sequence of games.

\textbf{Game 0.} This game is the same as the one defined in Def.~\ref{def:UFCMA_TWS}. 

\textbf{Game 1.} This proceeds as in the real construction, except that, while verifying the forged signature $(\tag^*, M^*, vk_u^*, T^*, \sigma^*_u, \pi^*)$, the challenger also checks whether $vk_u^*$ was used in a previous $\oracle^{\Wit}(\cdot)$ or $\oracle^{\sign}(\cdot)$ responses. If it was not present in any $\oracle^{\Wit}(\cdot)$ responses, but was used in an $\oracle^{\sign}(\cdot)$ response for a different message, then the adversary loses.

This game is indistinguishable from the real protocol, i.e. Game 0, by the EUF-CMA security of $Sig$.

\textbf{Game 2.} This proceeds as in Game 1, except that, while verifying the forged signature $(\tag^*, M^*, vk_u^*, T^*, \sigma^*_u, \pi^*)$, the challenger additionally uses the knowledge extractor for $\Pi$ to extract witness $(\jwt^*, \salt^*, \nonceRand^*)$ 
and the adversary loses if $P_{zk}((\curOPpk, \iss, \zkaddr^*, T^*, vk_u^*), (\jwt^*, \salt^*, \nonceRand^*))$ is false.

This game is indistinguishable from Game 1 by the knowledge-soundness of $\Pi$.

\smallskip
\textbf{Game 3.} This game runs identically to Game 2, except the adversary loses if there was no $\oracle^{\Wit}(\cdot)$ response of the form $(\jwt^*, \cdots)$.

This game is indistinguishable from Game 2 by the EUF-CMA security of $\issueJWT$, guaranteed by the underlying signature scheme of the issuer. 

\smallskip

Given the indistinguishability of the real protocol and Game 3, the unforgeability adversary succeeds in Game 3 with a probability that is negligibly away from that in the real game. Now observe that the Game 3 the adversary never wins. Hence the protocol is an unforgeable \TWS.
\end{proof}

\zkloginpv*

\begin{proof} 
The algorithm $\SimGen(\cdot)$ generates the $zkcrs$ in simulation mode with trapdoor $\td$. For any signing query $\tag = (\curOPpk, \iss, \zkaddr, T)$, the oracle $\oracle^{\simsign}(\cdot)$ generates a fresh key-pair $(vk_u, sk_u) \gets Sig.\Gen(1^\lambda)$, sets $\zkx \gets (\curOPpk, \iss, \zkaddr, T, vk_u)$, computes $\sigma_u \gets Sig.\sign(sk_u, M)$, computes $\pi \gets \Pi.Sim(zkcrs, \td, \zkx)$, and outputs signature $(vk_u, T, \sigma_u, \pi)$.

This game is indistinguishable from real protocol by the ZK property of $\Pi$. Hence the proposed \TWS achieves witness hiding.
\end{proof}




\zkloginsec*

\begin{proof}
We show that we can build an unforgeability adversary $\advA'$ for $\systws$, given a security adversary $\advA$ for \sysname.

$\advA'$ has access to \TWS oracles $\oracle^{\Wit}(\cdot)$ and $\oracle^{\sign}(\cdot)$. 
and access to JWK of OPs.
When $\advA$ requests for $\oracle^{\mathsf{GetWitness}}(\iss, \zkaddr, T)$, $\advA'$ calls $\oracle^{\Wit}(\cdot)$ 
with $(\curOPpk, \iss, \zkaddr, T)$, obtains 
$(\jwt, \salt, r, \vk_u, \sk{u})$ and relays that to $\advA$.
When $\advA$ requests for $\oracle^{\mathsf{zkLoginSign}}(\zkaddr, \iss, m, T)$, $\advA'$ 
first obtains $\curOPpk$ from JWK of OP, 
sets $\tag \gets (\curOPpk, \iss, \zkaddr, T)$, and then calls $\oracle^{\mathsf{Wit}}(\tag)$ obtaining witness $w$. Then it
calls $\oracle^{\sign}(\cdot)$ with $(\tag, pk, w, m)$, obtains $\sigma$ and relays that to $\advA$.

Let $\advA$ return $(\zkaddr^*, \iss^*, m^*, \sigma^*, T^*)$, then $\advA'$ returns $(\iss^*, \zkaddr^*, T^*, m^*, \sigma^*)$.
Now we argue that $\advA'$ wins if $\advA$ wins.
Given the event $E$ there was no $\oracle^{\Wit}(\cdot)$ call with tag $(\curOPpk, \iss^*, \zkaddr^*, T^*)$.
Given the event $F$ there was no $\oracle^{\sign}(\cdot)$ call with tag $(\curOPpk, \iss^*, \zkaddr^*, T^*)$ and message $m^*$. If $\mathsf{zkLoginVerify}$ succeeds then $\systws.\mathsf{verify}$ also succeeds. 
This concludes the reduction.
\end{proof}

\zkloginul*

\begin{proof}
    
We prove unlinkability from WH using a series of games. In all the games, let the adversary send claimsets $\claimset_0, \claimset_1$. 

\textbf{Game 1}. Adversary is given $\zkaddr$ constructed from $\claimset_0$ and access to $\oracle^{\sign}(\cdot)$.

\textbf{Game 2}. Adversary is given $\zkaddr$ constructed from $\claimset_0$ and access to $\oracle^{\mathsf{SimSign}(\cdot)}$. Game 2 is computationally indistinguishable from Game 1 by the Witness Hiding property of $\systws$.


\textbf{Game 3}. Adversary is given $\zkaddr$ constructed from $\claimset_1$ and access to $\oracle^{\mathsf{SimSign}(\cdot)}$. Game 3 is computationally indistinguishable from Game 2 by the Hiding property of the commitment used to construct $\zkaddr$ from the claimset. 


\textbf{Game 4}. Adversary is given $\zkaddr$ constructed from $\claimset_1$ and access to $\oracle^{\sign}(\cdot)$. Game 4 is computationally indistinguishable from Game 3 by the Witness Hiding property of $\systws$.


Note that the transition Game 2 $\to$ Game 3 doesn't work if the adversary had access to $\oracle^{\mathsf{Wit}(\cdot)}$.
\end{proof}





    \section{A Generic \TWS Construction}
\label{sec:nizk-witsig}


We construct a \TWS for the signature verification predicate of a signature scheme $\Sigma = (\keygen, \sign, \verify)$. This instantiation is built upon two primary components: a commitment scheme, $\comm: \{0,1\}^* \to \{0, 1\}^{\lambda}$, and a non-interactive zero-knowledge (NIZK-PoK) scheme $\Pi$. It closely follows the proposed SoK construction in~\cite{C:ChaLys06}.

For a key-pair $(\sk{}, \vk)$ sampled by $\Sigma.\keygen(\lambda)$, we define
\[
    P_{\vk}(t, w) \equiv \Sigma.\verify(\vk, w, t)
\]

Now we describe the \TWS $\Sigma_{sig}$ for the predicate $P_{\vk}$.

\begin{description}
    \item [$\Gen(\lambda)$]:
        \begin{itemize}
            \item Let language $\mathcal L_{\vk} = \{ (t, m, c)\ |\ \exists w, r: c = \comm(t, m, w; r) \land \verify(\vk, w, t) = 1 \}$
            \item Sample $\crs \gets \Pi.\setup(\lambda, {\mathcal L}_{\vk})$.
            \item Output $pk := \crs$
        \end{itemize}
    \smallskip
    \item [$\sign(t, pk, w, M)$]:
        \begin{itemize}
            \item If $P_{\vk}(t, w)$ is false, then output $\bot$.
            \item Sample random $r$.
            \item Compute $c \gets \comm(t, M, w; r)$.
            \item Compute $\pi \gets \Pi.\prove(\crs, (t, M, c), (w, r))$.
            \item Output $\sigma := (c, \pi)$.
        \end{itemize}
    \smallskip
    \item [$\verify(t, pk, M, \sigma)$]:
        \begin{itemize}
        \item Parse $(c, \pi):=\sigma$
        \item Parse $\crs:=pk$
            \item Output $\Pi.\verify(\crs, \pi, (t, M, c))$
        \end{itemize}
\end{description}


\begin{theorem}
    $\Sigma_{sig}$ is an unforgeable tagged witness signature scheme, given that $\Pi$ satisfies knowledge-soundness, the signature scheme is EUF-CMA secure, and $\comm$ is a binding commitment scheme.
\end{theorem}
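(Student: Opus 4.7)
The plan is to reduce to the EUF-CMA security of the underlying signature scheme $\Sigma$ through a sequence of game hops, mirroring the standard argument for Chase--Lysyanskaya style signatures of knowledge. I expect the main obstacle to be that the TWS unforgeability game permits the adversary to query $\oracle^{\sign}(t^*, m')$ on the very tag $t^*$ it eventually forges on (for some $m' \neq m^*$), whereas a reduction to $\Sigma$-EUF-CMA must never call its own $\Sigma.\sign$ oracle on $t^*$. A naive reduction that simply forwards each sign-oracle query to $\Sigma.\sign$ therefore fails, and the core of the argument will be to simulate $\oracle^{\sign}$ responses without ever needing a genuine witness for $t$.

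First I would replace every NIZK proof produced inside $\oracle^{\sign}$ with one generated by the zero-knowledge simulator of $\Pi$ (using a simulation trapdoor set up at $\Gen$ time); indistinguishability follows from the zero-knowledge property implicit in $\Pi$ being a NIZK-PoK. Next I would replace each commitment $c = \comm(t, M, w; r)$ returned by $\oracle^{\sign}$ with a commitment to a fixed dummy value, relying on the hiding of $\comm$ induced by the fresh randomness $r$. After these two hops the simulated signing oracle no longer touches the witness at all; only $\oracle^{\Wit}$ still requires an actual $w$.

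With that groundwork in place, the reduction $\mathcal{B}$ against $\Sigma$-EUF-CMA is direct. $\mathcal{B}$ receives $\vk$ from its challenger, generates the CRS in extraction mode (indistinguishable from honest setup by the CRS-indistinguishability clause of extractability), forwards each $\oracle^{\Wit}(t)$ query to its own $\Sigma.\sign$ oracle, and answers $\oracle^{\sign}$ queries entirely via the simulator, never invoking $\Sigma.\sign$. Upon receiving the forgery $(t^*, m^*, (c^*, \pi^*))$, $\mathcal{B}$ applies the knowledge extractor to $\pi^*$ to recover $(w^*, r^*)$; knowledge-soundness of $\Pi$ ensures $\Sigma.\verify(\vk, w^*, t^*) = 1$ up to negligible slack, and binding of $\comm$ confirms that $w^*$ really is the value committed to in $c^*$. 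Because the winning condition forces $t^* \notin \mathcal{Q}_w$, $\mathcal{B}$ has never queried $\Sigma.\sign$ on $t^*$, so $(t^*, w^*)$ constitutes a valid EUF-CMA forgery against $\Sigma$. A standard advantage accounting across the hops then bounds $\mathsf{Adv}^{\twsuf}_{\Sigma_{sig}, \mathcal{A}}$ by the sum of the zero-knowledge, hiding, CRS-indistinguishability, extractability, and $\Sigma$-EUF-CMA advantages, all of which are negligible.
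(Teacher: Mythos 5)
Your proposal takes a genuinely different route from the paper's proof, and the difference is substantive. The paper does \emph{not} simulate the signing oracle: its reduction forwards both $\oracle^{\Wit}(t)$ and $\oracle^{\sign}(t,M)$ queries to the $\Sigma$-signing oracle, answers sign queries with honestly computed commitments and real proofs, extracts $(w^*,r^*)$ from the forgery, and then disposes of the problematic case you identified --- some $(t^*,M')$ with $M'\neq M^*$ having been queried to $\oracle^{\sign}$ --- by an appeal to the binding of $\comm$. That is why binding appears in the theorem statement at all; in your proof binding does essentially no work (the extractor already hands you an opening of $c^*$), while you instead invoke zero-knowledge of $\Pi$ and hiding of $\comm$, neither of which is a hypothesis of this theorem (the paper deliberately reserves both for the separate witness-hiding theorem). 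So, strictly speaking, you are proving a statement with different hypotheses.

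That said, you have put your finger on the real difficulty, and simulating the signing oracle is the standard Chase--Lysyanskaya resolution. But as written your reduction has a gap of its own: after replacing the $\oracle^{\sign}$ answers with simulated proofs, you extract from the adversary's $\pi^*$. Extraction in the presence of simulated proofs is \emph{simulation-extractability}, which is strictly stronger than the plain knowledge-soundness the theorem assumes (the paper's extractability definition only quantifies over adversaries that see the CRS, not simulated proofs), and it also requires a single $\crs$ carrying the simulation and extraction trapdoors simultaneously. To keep your route you must strengthen the NIZK assumption accordingly; to match the paper's hypotheses you must instead confront its final case head-on. It is worth noting that the paper's own handling of that case is delicate --- nothing forces the adversary's $c^*$ to coincide with a challenger-produced commitment, so the claimed collision of commitments needs more justification --- and your discomfort with the naive reduction is exactly the right starting point for tightening either argument.
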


\begin{proof}
We construct an attack on the unforgeability of the signature scheme used for witness, given an attack on \TWS unforgeability.

The \TWS unforgeability challenger proceeds as in the real construction, except that it uses the knowledge extractor of $\Pi$, and has access to the signing oracle $\sign(\sk{}, \cdot)$. 

For an $\oracle^{\sign}$ query $(t, M)$, the challenger queries the signing oracle with $t$, and receives $w$. Then it samples $r$ and computes $c \gets \comm(t, M, w; r)$ and $\pi \gets \Pi.\prove(\crs, (t, M, c), (w, r))$ and outputs $\sigma = (c, \pi)$. 

For an $\oracle^{\Wit}$ query $t$, the challenger queries the signing oracle with $t$, and receives $w$. 

When it receives a forgery $\sigma^* = (t^*, M^*, c^*, \pi^*)$, such that $\pi^*$ verifies, it extracts witness $w^*, r^*$ such that $c^* = \comm(t^*, M^*, w^*; r^*) \land \verify(\vk, w^*, t^*) = 1$. 
Then it outputs $(t^*, w^*)$ to the signature scheme challenger. 

Due to the knowledge-soundness of $\Pi$, we should have, with high probability, $c^* = \comm(t^*, M^*, w^*; r^*) \land \verify(\vk, w^*, t^*) = 1$.

Due to the unforgeagibility of the signature scheme, we should have that $t^*$ was, with high probability, queried by the challenger to the signature oracle, either to respond to an $\oracle^{\sign}$ query or an $\oracle^{\Wit}$ query. If $t^*$ was queried to $\oracle^{\Wit}$ then the TWS adversary loses. Otherwise, if $(t^*, M^*)$ was queried to $\oracle^{\sign}$, then also the TWS adversary loses. We only have to consider the case that for all $(t^*, M')$ queried to $\oracle^{\sign}$, we have $M' \neq M^*$. Then we have $\comm(t^*, M^*, w^*; r^*) = \comm(t^*, M', w^*; r')$ for some $(M', r') \neq (M^*, r^*)$. Due to the binding property of $\comm$, this only holds with negligible probability.
\end{proof}

\begin{theorem}
    The above NIZK-based construction is a witness hiding \TWS, given that $\Pi$ satisfies ZK and $\comm$ is a hiding commitment scheme.
\end{theorem}

\begin{proof}
We prove this using a sequence of games.

\smallskip
\textbf{Game 1.} In this game, the challenger generates zkcrs in the simulation mode and holds the trapdoor. Note that for some systems like \cite{EC:Groth16} this is identical to the real mode, while for \cite{EC:GroSah08} these are distinct. The proofs are produced using the simulation mode. This game is indistinguishable from the real protocol by the zk property of $\Pi$.

\smallskip
\textbf{Game 2.} In this game, the challenger produces fake signatures using the simulation trapdoor. So it samples $c \gets \comm(0)$ with a fake proof $\pi$ over $(t, m, c)$. This game is indistinguishable from Game 1 by the hiding property of $\comm$. 

\smallskip
Now observe that we can use the Game 2 challenger as the $\simsign$ protocol.
\end{proof}

    \section{Groth16 Ceremony} 
\label{sec:ceremony}

\sysname employs the Groth16 zkSNARK construction as the most efficient zkSNARK to date, to instantiate the zero-knowledge proofs. However, this construction requires a circuit-specific Common Reference String (CRS) setup by a trusted party, we use well-known trust mitigation techniques to relax this trust assumption. We ran a ceremony protocol to generate this CRS which bases its security on the assumed honesty of a single party out of a large number of parties.

The ceremony essentially entailed a cryptographic multi-party computation (MPC) conducted by a diverse group of participants to produce this Common Reference String (CRS). This process followed the MPC protocol for reusable parameters, referred to as MMORPG, as detailed by Bowe, Gabizon, and Miers in~\cite{eprint2017:BoweGM}. The protocol roughly proceeds in 2 phases. The first phase yields a sequence of monomials, which are essentially powers of a secret value $\tau$ in the exponent of a generator of a pairing-friendly elliptic curve. This sequence takes the form of $g, g^{\tau}, g^{\tau^2}, \ldots, g^{\tau^n}$, where $n$ is an upper bound of circuit size and $\tau=\prod_{i=1}^\ell{\tau_i}$ s.t. $\ell$ denotes the total number of contributors. Thereby it enables reducing the trust level to 1 out of the total number of contributors, i.e., $\ell$. As this phase is not specific to any particular circuit, we have adopted the outcome of the Perpetual Powers of Tau\footnote{\href{https://github.com/privacy-scaling-explorations/perpetualpowersoftau}{https://github.com/privacy-scaling-explorations/perpetualpowersoftau}}, which is contributed by a sufficiently large community. However, in order to fully implement the trust minimization process, we must establish a ceremony for the second phase of setup, which is tailored to the \sysname circuit.

In the presence of a coordinator, the MMORPG protocol allows an indefinite number of parties to participate in sequence, without the need of any prior synchronization or ordering. Each party needs to download the output of the previous party, generate randomness of its own, and then layer it on top of the received result, producing its own contribution, which is then relayed to the next party. The protocol guarantees security, if at least one of the participants follows the protocol faithfully, generates strong randomness and discards it reliably.

Since the MPC is sequential, each contributor had to wait until the previous contributor finished in order to receive the previous contribution, follow the MPC steps, and produce their own contribution. Due to this structure, participants waited in a queue while those who joined before them finished. To authenticate participants, each participant received a unique activation code. The activation code was the secret key of a signing key pair, which had a dual purpose: it allowed the coordination server to associate the participant’s email with the contribution, and it verified the contribution with the corresponding public key.


Participants had two ways to contribute: through a browser or a docker. The browser option was the more user-friendly as all parts of the process happened in the browser. The Docker option required Docker setup but was more transparent—the Dockerfile and contributor source code are open-sourced and the whole process is verifiable. The browser option utilized snarkjs\footnote{\href{https://github.com/iden3/snarkjs}{https://github.com/iden3/snarkjs}} while the Docker option utilized a forked version of Gurkan’s implementation\footnote{\href{https://github.com/kobigurk/phase2-bn254}{https://github.com/kobigurk/phase2-bn254}}. This provided software variety so that contributors could choose whichever method they trust most. In addition, participants could generate entropy via entering random text or making random cursor movements.

The \sysname circuit and the ceremony client code were made open source and the links were made available to the participants to review before the ceremony, if they chose to do so. In addition, developer docs and an audit report on the circuit were posted for review. Challenge number 81 was adopted (resulting from 80 community contributions) from perpetual powers of tau in phase 1, which is circuit agnostic. The output of the Drand random beacon was applied to remove bias. After the phase 2 ceremony, the output of the Drand random beacon was applied again to remove bias from contributions.

The final CRS along with the transcript of every participant’s contribution is available in a public repository. Contributors received both the hash of the previous contribution they were working on and the resulting hash after their contribution, displayed on-screen and sent via email. They can compare these hashes with the transcripts publicly available on the ceremony site. In addition, anyone is able to check that the hashes are computed correctly and each contribution is properly incorporated in the finalized parameters.

We also note that various other trust mitigation techniques have also been defined, including subversion-resistant zkSNARKs~\cite{AC:ABLZ17,PKC:Fuchsbauer18} and Multi-Party Computation (MPC)~\cite{SP:BCGTV15}. Additionally, we leave the implementation of \sysname using universal and updatable zkSNARKs such as Plonk~\cite{EPRINT:GabWilCio19} and Sonic~\cite{CCS:MBKM19}, which removes the complex circuit-dependent setups, as an interesting future extension.

    \section{Miscellaneous details}
\label{sec:misc}


\paragraph{Why include $\aud$ in address derivation?}
Including the audience mitigates two concerns.

First, popular OpenID providers like Google and Facebook are used by many to login to hundreds of websites and applications.
So a crucial requirement for \sysname is that a malicious app or website should not be able to steal a user's funds.
Including audience ($\aud$) in the address derivation achieves this.
In other words, if we had derived address instead as $\zkaddr = H(\sub, \iss)$, then a malicious app will be able to steal the user's assets.
To be precise, this would only be a threat for providers employing public identifiers (\cref{sec:oidc}); the pseudonymity offered by pairwise identifiers protects us from attacks of the above style.

Second, even if pairwise identifiers were used, another reason to include the audience $\aud$ is that it ensures that the addresses of two users won't collide.
This is because an OpenID provider might assign the same subject identifier to two different users in two different contexts.

\paragraph{Escaped quote in a JSON key}
We assumed previously that JSON keys do not have escapes. If this does not hold, an attacker can break security, as shown in~\cref{lst:quote-json-key}.

\begin{lstlisting}[language=json, caption={Quote inside a JSON key. This JSON can be parsed in two ways as shown by the start and end index markers.
The key is \texttt{"sub"} in both but the value is different.}, label={lst:quote-json-key}]
{
    "sub": "110463452167303598383",
    ^                             ^  
    "\\"sub": "110463452167303598382",
       ^                             ^
}
\end{lstlisting}

\fi

\vfill

\end{document}